\theoremstyle{plain}
\newtheorem{theorem}{Theorem}[section]
\newtheorem{lemma}[theorem]{Lemma}
\newtheorem{proposition}{Proposition}[section]
\theoremstyle{definition}
\newtheorem{assumption}{Assumption}
\theoremstyle{remark}
\newcommand{\boundmeas}{\mathcal{B}}
\newcommand{\ESS}{\mathcal{E}}
\newcommand{\E}{\mathbb{E}}
\newcommand{\N}{\mathbb{N}}
\newcommand{\X}{\mathbb{X}}
\newcommand{\Y}{\mathbb{Y}}
\newcommand{\bi}{\boldsymbol{i}}
\newcommand{\bj}{\boldsymbol{j}}
\newcommand{\bu}{\boldsymbol{u}}
\newcommand{\bv}{\boldsymbol{v}}
\newcommand{\bvarphi}{\boldsymbol{\varphi}}
\newcommand{\bH}{\boldsymbol{H}}
\newcommand{\bK}{\boldsymbol{K}}
\newcommand{\bX}{\mathcal{X}}
\newcommand{\bY}{\mathcal{Y}}
\renewcommand{\P}{\mathbb{P}}
\newcommand{\R}{\mathbb{R}}
\newcommand{\bR}{\boldsymbol{R}}
\newcommand{\cR}{\mathcal{R}}
\newcommand{\ud}{\mathrm{d}}
\newcommand{\ug}{\underline{g}}
\newcommand{\uK}{\underline{K}}
\newcommand{\un}{\underline{n}}
\newcommand{\upi}{\underline{\pi}}
\newcommand{\ugamma}{\underline{\gamma}}
\newcommand{\uw}{\underline{w}}
\newcommand{\uW}{\underline{W}}
\newcommand{\uZ}{\underline{Z}}
\newcommand{\uQ}{\underline{Q}}
\newcommand{\uxi}{\underline{\xi}}
\newcommand{\barphi}{\overline{\varphi}}
\newcommand{\ol}[1]{\overline{#1}}
\newcommand{\ones}{\boldsymbol{1}}
\newcommand{\eye}{\mathbf{I}}
\newcommand{\bxi}{\boldsymbol{\xi}}
\newcommand{\osc}{\mathrm{osc}}
\newcommand{\Var}{\mathrm{Var}}
\begin{document}
\begin{frontmatter}
\title{Augmented Island Resampling Particle Filters for Particle Markov Chain Monte Carlo}
\runtitle{AIRPF for PMCMC}

\begin{aug}
%%%%%%%%%%%%%%%%%%%%%%%%%%%%%%%%%%%%%%%%%%%%%%%
%% Only one address is permitted per author. %%
%% Only division, organization and e-mail is %%
%% included in the address.                  %%
%% Additional information can be included in %%
%% the Acknowledgments section if necessary. %%
%% ORCID can be inserted by command:         %%
%% \orcid{0000-0000-0000-0000}               %%
%%%%%%%%%%%%%%%%%%%%%%%%%%%%%%%%%%%%%%%%%%%%%%%
\author[A]{\fnms{Kari}~\snm{Heine}\ead[label=e1]{kmph20@bath.ac.uk}\orcid{0000-0001-6300-3258}}
%%%%%%%%%%%%%%%%%%%%%%%%%%%%%%%%%%%%%%%%%%%%%%
%% Addresses                                %%
%%%%%%%%%%%%%%%%%%%%%%%%%%%%%%%%%%%%%%%%%%%%%%
\address[A]{Department of Mathematical Sciences,
University of Bath\printead[presep={,\ }]{e1}}

\runauthor{K. Heine}
\end{aug}

\begin{abstract}
In modern days, the ability to carry out computations in parallel is key to efficient implementations of computationally intensive algorithms. This paper investigates the applicability of the previously proposed Augmented Island Resampling Particle Filter (AIRPF) --- an algorithm designed for parallel implementations --- to particle Markov Chain Monte Carlo (PMCMC). We show that AIRPF produces a non-negative unbiased estimator of the marginal likelihood and hence is suitable for PMCMC. We also prove stability properties, similar to those of the $\alpha$SMC algorithm, for AIRPF. This implies that the error of AIRPF can be bound uniformly in time by controlling the effective number of filters, which in turn can be done by adaptively constraining the interactions between filters. We demonstrate the superiority of AIRPF over independent Bootstrap Particle Filters, not only numerically, but also theoretically. To this end, we extend the previously proposed collision analysis approach to derive an explicit expression for the variance of the marginal likelihood estimate. This expression admits exact evaluation of the variance in some simple scenarios as we shall also demonstrate.
\end{abstract}

\begin{keyword}[class=MSC]
\kwd[Primary ]{60G35}
\kwd[; secondary ]{60F25, 60F99}
\end{keyword}

\begin{keyword}
\kwd{Interacting Particle System}
\kwd{Effective sample size}
\kwd{Hidden Markov Model}
\kwd{Parallel Computing}
\kwd{Sequential Monte Carlo}
\end{keyword}

\end{frontmatter}
%%%%%%%%%%%%%%%%%%%%%%%%%%%%%%%%%%%%%%%%%%%%%%
%% Please use \tableofcontents for articles %%
%% with 50 pages and more                   %%
%%%%%%%%%%%%%%%%%%%%%%%%%%%%%%%%%%%%%%%%%%%%%%
%\tableofcontents

\section{Introduction}
\label{sec:intro}

Particle Markov chain Monte Carlo (PMCMC)~\citep{andrieu_et_al10} is a computational methodology with a wide range of applications, see e.g.~\cite{kokkala_sarkka15, karppinen19, nguyen23}, for joint Bayesian estimation of a latent signal process together with the model parameters. In a nutshell, PMCMC refers to an MCMC algorithm that deploys a sequential Monte Carlo (SMC) algorithm~\cite{doucet_et_al01} within each MCMC iteration to produce a non-negative and unbiased estimate of the marginal likelihood (marginalised over the latent signal process) for a given parameter value. Reminiscent to the pseudo-marginal MCMC of \cite{andrieu_roberts09}, the unbiased marginal likelihood estimate is used to compute the correct Metropolis-Hastings acceptance probability which ensures that the resulting Markov chain targets the joint Bayesian posterior distribution for both the latent signal process and the model parameters. 

SMC algorithms are notorious for their computational cost. In the context of PMCMC, they may be run thousands of times and thus, the computational efficiency of the deployed SMC algorithm is of utmost importance. To make algorithms computationally efficient by leveraging the power of modern computing systems, they must be suitable for parallel processing. The first SMC algorithms, such as the classical Bootstrap Particle Filter (BPF)~\cite{gordon_et_al93}, are inherently sequential, but various suggestions to parallelise particle filters have been made throughout the past decades \cite{verge_et_al15,heine_et_al20,bolic_et_al05,whiteley_et_al16}. In this paper, we focus specifically on the \emph{Augmented Island Resampling Particle Filter (AIRPF)} of \cite{heine_et_al20} --- an SMC algorithm specifically designed for parallel computing systems --- and its application to PMCMC. 

\subsection{Contributions}

AIRPF is known to be consistent and stable~\cite{heine_et_al20}, but in this paper we shall expand the analysis of AIRPF from the PMCMC point of view, first by carrying out the relatively straightforward analysis to show that AIRPF also produces an unbiased estimator of the Feynman-Kac normalisation coefficient, and hence is suitable for PMCMC. 

A generic SMC algorithm, $\alpha$SMC, was introduced in \cite{whiteley_et_al16} and it admits e.g.~BPF, sequential importance sampling (SIS), and the adaptive resampling BPF algorithms as specific instances. The consistency and stability analysis carried out in \cite{whiteley_et_al16} for $\alpha$SMC established a connection between the uniform convergence of $\alpha$SMC and the \emph{effective sample size (ESS)}~\cite{kong_et_al94}; by controlling the ESS to keep it above a given threshold, the approximation error can be bound uniformly in time. This analysis not only gave an immediate stability result of for adaptive resampling schemes due to the control on the ESS, but it also enabled a rigorous argument in favour of interacting parallel particle filters over independet BPFs \cite[see Section 5]{whiteley_et_al16}. Our second contribution is to extend the stability result of \cite{whiteley_et_al16} to AIRPF, which is not an $\alpha$SMC algorithm. This analysis forms the core of this paper and it enables a simple and rigorous line of arguments to suggest the superiority of AIRPF over multiple independent BPFs.

The third contribution if this paper is a rigorous derivation of the formulae for the marginal likelihood estimate variance for AIRPF and comparison with IBPF and the augmented resampling particle filter (ARPF) of \cite{heine_et_al20}. This derivation follows the collision analysis of \cite{cerou_et_al11} which was further developed in \cite{heine_et_al17} and although the resulting formulae can only be evaluated in some academic toy applications, they shed light on the theoretical mechanisms governing the impact of the filter interactions on the error of the PMCMC estimates.

\subsection{Formal problem statement}
\label{sec:problem statement}

Let $X^{\theta} = (X^{\theta}_{0},X^{\theta}_{1},\ldots)$ be a latent discrete time Markovian signal process, and let $Y^\theta = (Y^{\theta}_{0},Y^{\theta}_{1},\ldots)$ be an observation process. $X^{\theta}$ and $Y^\theta$ are assumed to take values in sufficiently regular measurable spaces $(\X,\bX)$ and $(\Y,\bY)$, respectively, with laws specified by
\begin{align}
\begin{array}{rll}
    X^{\theta}_{n} \mid X^{\theta}_{n-1} = x &\sim K_n^\theta(x,\,\cdot\,), \quad X^{\theta}_{0} \sim \pi^\theta_{0}, & \qquad \forall~n\in \N := \{1,2,\ldots\}, \\[2mm]
        Y^{\theta}_{n} \mid X^{\theta}_{n} = x &\sim G^{\theta}_n(x,\,\cdot,),& \qquad \forall~n \in \N_0 := \{0,1,\ldots\},
\end{array}\label{eq:hmm}
\end{align}
where $\pi^{\theta}_{0}$ is a probability measure on $\bX$, and  $K^\theta_n:\X\times\bX\to[0,1]$ and $G_n^\theta:\X,\bY\to[0,1]$ are probability kernels. The kernels and the initial distribution --- and consequently $X^{\theta}$ and $Y^\theta$ --- are assumed to be parameterised by $\theta \in \Theta \subset \mathbb{R}^d$. We fix $Y^\theta$ to a given realisation $(y_0,y_1,\ldots)$, and write $g^{\theta}_{n} = G_n^{\theta}(\,\cdot\,,y_n)$ where $G_n^{\theta}$ denotes a conditional probability density w.r.t.~some $\sigma$-finite measure on $\bY$, typically the Lebesgue measure.   

The model \eqref{eq:hmm} gives rise to the Feynman-Kac measures (see e.g.~\cite{delmoral04})
\begin{align}\nonumber
    \pi_{n}^{\theta} = \gamma_{n}^{\theta} / \gamma_{n}^{\theta}(1), \qquad \forall\,n\in \N_0,
    % \pi_{n}^[\theta}(\varphi) = \frac{\gamma_{n}^{\theta}(\varphi)}{\gamma_{n}^{\theta}(1)}, \qquad \forall\,n\in \N_0,
\end{align}
where the unnormalised Feynman-Kac measures $\gamma_{n}^{\theta}$ are defined as 
\begin{align}\label{eq:unnormalised FK measure}
    \gamma_{n}^{\theta}(\varphi) = \E_{\theta}\!\left[\varphi(X^{\theta}_{n})\prod_{q=0}^{n-1} g^{\theta}_{q}(X^{\theta}_q)\right], \qquad \forall~\varphi \in \boundmeas(\X),~n\in \N_0,
\end{align}
where $\boundmeas(\X)$ denotes the set of bounded and measurable real valued functions on $\X$, and $\E_\theta[\,\cdot\,]$ denotes the expectation over the law of $X^\theta$ for given $\theta$. The measures 
\begin{align*}
\pi^{\theta}_{n}(\ud x) = \P(X^{\theta}_n \in \ud x \mid Y^{\theta}_0=y_0,\ldots, Y^{\theta}_{n-1} = y_{n-1}),
\end{align*}
are known as the prediction filter and they are typically the object of interest in Bayesian inference on hidden Markov models.

The Particle Marginal Metropolis-Hastings (PMMH) algorithm \cite{andrieu_et_al10} approximates the joint Bayesian posterior distribution of $\theta$ and $X^\theta$ by deploying SMC to construct sample based approximations of $\pi_{n}^{\theta}$ and the marginal likelihood $\gamma_{n}^{\theta}(1)$. In this paper, we shall analyse the approximations of $\pi_{n}^{\theta}$ and $\gamma_{n}^{\theta}(1)$ obtained with AIRPF and show that not only is AIRPF suitable for PMMH, but in certain situations preferable over independent BPFs (IBPF). From now on --- as we shall focus on the SMC part of PMMH --- we shall simplify the notation by suppressing the explicit dependency on $\theta$ in all notations.

\subsection{Organisation}

Section \ref{sec:methodology} describes the AIRPF algorithm and contains the theoretical analysis to establish the lack-of-bias, consistency and stability. Section \ref{sec:AIRPF vs IBPF} carries out a comparison between IBPF and AIRPF, including the derivation of the collision analysis based variance formula for the marginal likelihood estimate of AIRPF. Section \ref{sec:numerical experiments} concludes the paper with some illustrative numerical experiments.

% -------------------------------------------
% -------------------------------------------
%   AIRPF
% -------------------------------------------
% -------------------------------------------
\section{Augmented Island Resampling Particle Filter}
\label{sec:methodology}

AIRPF runs $m \in \N$ interacting particle filters in parallel, the case $m=1$ being equivalent to a single BPF. The filters deploy a sample of size $M \in \N$ each, making the total sample size $N=mM$. In general, $m$ is an arbitrary integer power of the parameter $r\in \N \setminus \{1\}$, but as in \cite{heine_et_al20}, we shall focus on the case $r = 2$ for simplicity. The analysis does not rely on this assumption.

The $m$ filters interact in $S_m = \log_{2}(m)$ stages, according to  the matrices %$A_1,\ldots,A_{S_m}$ defined as
\begin{align}\label{eq:alpha matrices}
    A_s = \eye_{2^{S_m-s}} \otimes \frac{1}{2}\ones_{2} \otimes \eye_{2^{s-1}}%\otimes \frac{1}{M}\ones_{M}
    ,\quad s \in \{1,\ldots,S_m\},
\end{align}
where $\otimes$ denotes the Kronecker product and, for all $\ell \in \N$, the matrices $\eye_{\ell}$ and $\ones_{\ell}$ denote the size $\ell$ identity matrix, and the size $\ell$ matrix of ones, respectively.  AIRPF algorithm is given in Algorithm \ref{alg:AIRPF} below, where we also use the  notations $\bxi^{k}_{n,s} := (\xi^{k,1}_{n,s},\ldots,\xi^{k,M}_{n,s})$ %, $\bXi_{n,s} := (\bxi^{1}_{n,s},\ldots,\bxi^{m}_{n,s})$ 
and
\begin{align*}
\bK_n(\bxi^{k}_{n,s},\,\cdot\,) := K_n^{\otimes M}(\bxi^{k}_{n,s},\,\cdot\,) := K_n(\xi^{k,1}_{n,s},\,\cdot\,)\otimes \cdots \otimes K_n(\xi^{k,M}_{n,s},\,\cdot\,),
\end{align*}
where $\otimes$ denotes a measure product. We also let $\ones(\,\cdot\,)$ denote the indicator function.
\begin{algorithm}[h!]
\caption{AIRPF}
\label{alg:AIRPF}
\begin{algorithmic}[1]
\For{$k \in \{1,\ldots,m\}$} \quad \verb+%% Initialisation+
\State $W^{k}_{0,0} = 1$,\quad $\bxi^{k}_{0,0} \sim \pi_0^{\otimes M} := \pi_0 \otimes \cdots \otimes \pi_0$
\EndFor
\State
\For{$n \in \{0,1,\ldots\}$} \quad \verb+%% Filter main loop+
\State $\ESS_{n,0}^{mM} = {\left(m^{-1}\sum_{k=1}^{m}W_{n,0}^k\right)^2}/{m^{-1}\sum_{k=1}^{m}{(W_{n,0}^k)}^2}$ \quad \verb+%% ENF+
\State
\For{$k \in \{1,\ldots,m\}$} \quad \verb+%% Iterate over filters+
\State $W_{n,1}^k \leftarrow {M^{-1}}\sum_{i=1}^MW^{k}_{n,0}g_n(\xi^{k,i}_{n,0})$ \label{code line 3} \quad \verb+%% Marginal likelihood estimation+
\State
\For{$i \in \{1,\ldots,M\}$}\label{code line 1} \quad
 \verb+%% Internal resample+
\State $\xi^{k,i}_{n,1} \sim {M^{-1}}\sum_{i=1}^{M}W^{k}_{n,0}g_n(\xi^{k,i}_{n,0})\delta_{\xi^{k,i}_{n,0}}/W_{n,1}^k$\label{code line 2}
\EndFor
\EndFor
\State
\For{$s \in \{1,\ldots,S_m\}$} \quad \verb+%% Filter interaction: iterate over stages+
\State $\ESS_{n,s}^{mM} = {\left(m^{-1}\sum_{k=1}^{m}W_{n,s}^k\right)^2}/{m^{-1}\sum_{k=1}^{m}{(W_{n,s}^k)}^2}$ \quad \verb+%% ENF+
\State $\alpha_s = A_s \ones(\ESS_{n,s}^{mM} < \tau) + \eye_{m}\ones(\ESS_{n,s}^{mM} \geq \tau)$ \label{adaptation line}
\State
\For{$k \in \{1,\ldots,m\}$} \quad \verb+%% Iterate over filters+
\State $W^{k}_{n,s+1} \leftarrow \sum_{\ell=1}^{m} \alpha^{k\ell}_{s}W^{\ell}_{n,s}$ \label{line:normaliser}
\If{$s <  S_m$}
    \State $\bxi_{n,s+1}^{k} \sim (W^{k}_{n,s+1})^{-1}\sum_{\ell=1}^{m} \alpha^{k\ell}_{s}W^{\ell}_{n,s}\delta_{\bxi^{\ell}_{n,s}}$\label{code line 4}
\Else \quad \verb+%% Include mutation to the final stage+ 
    \State $W^{k}_{n+1,0} = W^{k}_{n,s+1}$,\quad$\bxi_{n+1,0}^{k} \sim (W^{k}_{n,s+1})^{-1}\sum_{\ell=1}^{m} \alpha^{k\ell}_{s}W^{\ell}_{n,s}\bK_{n+1}(\bxi^{\ell}_{n,s},\,\cdot\,)$\label{code line 5}%\todo{}
\EndIf
\EndFor
\EndFor
\EndFor
\end{algorithmic}
\end{algorithm} 

Our focus is on the interactions between filters rather than the interactions between individual particles, and therefore each filter is assumed to do internal resampling at each iteration. However, we do allow adaptive interactions between filters, as shown on line \ref{adaptation line} of Algorithm \ref{alg:AIRPF} which is a slight deviation from the definition of AIRPF in \cite{heine_et_al20}. The filters interact, if the \emph{effective number of filters (ENF)}, denoted by $\ESS_{n,s}^{mM}$, goes below a threshold value $\tau \in (0,1]$. ENF is analogous to ESS, but based on filter weights rather than particle weights, giving it an intuitive interpretation of representing the number of filters with non-negligible weights. Our analysis does not rely on internal resampling being done at each iteration, and thus the results should hold also for adaptive internal resampling schemes.

AIRPF produces weighted empirical measures
\begin{align}\label{eq:output measure}
    \pi^N_{n,s} = \frac{\sum_{k=1}^m W^{k}_{n,s}\frac{1}{M}\sum_{i=1}^{M}\delta_{\xi^{k,i}_{n,s}}}{\sum_{k=1}^{m}W^{k}_{n,s}}, \qquad \forall~n \in \N_0,~s \in \{0,\ldots,S_m\},
\end{align}
from which an approximation of $\pi_n$ is obtained as $\pi^{N}_{n} = \pi^{N}_{n,0}$. The unnormalised Feynman-Kac measure approximations, that yield the marginal likelihood estimates, are
\begin{align*}
\gamma_{n}^{N}(\varphi) = \frac{1}{m}\sum_{k=1}^m W^{k}_{n,0}\frac{1}{M}\sum_{i=1}^{M}\delta_{\xi^{k,i}_{n,0}} \qquad \forall~n \in \N_0.%\pi^{N}_n(\varphi)\prod_{p=0}^{n-1}\pi^{N}_{p}(g_p), \qquad \forall~n \in \N_0.
\end{align*}

% ------------------------------------------------
% LACK-OF-BIAS
% ------------------------------------------------
\subsection{Lack-of-bias}
\label{sec:lack-of-bias}

Define kernels $Q_n, Q_{p,n}: (\X,\bX) \to [0,+\infty)$ for all $n\in \N$ and $0 \leq p < n$ as
\begin{align*}
    Q_{n}(x,\ud x') = g_{n-1}(x)K_n(x,\ud x'), \qquad \text{and} \qquad Q_{p,n} = Q_{p+1}\cdots Q_{n}, 
\end{align*}
with the convention that $Q_{n,n} = \mathrm{Id}$, i.e.~the identity kernel, for any $n\in \N$. In this case, 
\begin{align*}
    \gamma_{n} = \gamma_pQ_{p,n} = \gamma_{n-1}Q_{n}, \qquad 0 \leq p \leq n \in \N.
\end{align*}

\begin{theorem}\label{thm:lack-of-bias}
% \label{prop:unbiasedness of island augmented resample}
Fix $m,~M\in \N$ and let $N=mM$ and $\pi^{N}_{n} = \pi^{N}_{n,0}$ be as defined in \eqref{eq:output measure}. Then, 
\begin{align*}%\label{eq:interm unbiasedness}
    \E\left[ {\pi}^N_{n+1}(\varphi) \,\middle|\, \mathcal{F}_{n}\right] 
    = \frac{\pi^{N}_{n}(g_{n}K_{n+1}(\varphi))}{\pi^{N}_{n}(g_{n})}\qquad \forall~n \in \N_0,~\varphi \in \boundmeas(\X),
\end{align*}
where $\mathcal{F}_{n} := \sigma(\bxi_{0,0},\ldots,\bxi_{n,0})$ and $\bxi_{p,s} := (\bxi^{1}_{p,s},\ldots,\bxi^{m}_{p,s})$ for all $s \in \{0,\ldots,S_m\}$ and $p\in\N_0$.
\end{theorem}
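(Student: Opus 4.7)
The strategy is to work with the unnormalised weighted empirical measures $\tilde{\pi}^{N}_{n,s}(\varphi) := \sum_{k=1}^{m} W^{k}_{n,s} M^{-1}\sum_{i=1}^{M} \varphi(\xi^{k,i}_{n,s})$, noting that $\pi^{N}_{n,s}(\varphi) = \tilde{\pi}^{N}_{n,s}(\varphi)/\tilde{\pi}^{N}_{n,s}(1)$. The first step is to observe that $A_s$ is column-stochastic: the columns of $\frac{1}{2}\ones_2$ each sum to one, and Kronecker products multiply column sums. Since $\alpha_s \in \{A_s,\eye_m\}$ is column-stochastic in either case, summing the weight update $W^{k}_{n,s+1} = \sum_{\ell}\alpha^{k\ell}_s W^{\ell}_{n,s}$ over $k$ preserves the total weight, so $\sum_{k} W^{k}_{n,s+1} = \sum_{\ell} W^{\ell}_{n,s}$ for every interaction stage $s \in \{1,\ldots,S_m\}$. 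Combined with $\sum_{k} W^{k}_{n,1} = \tilde{\pi}^{N}_{n,0}(g_n)$ from the weighting step, this yields $\tilde{\pi}^{N}_{n+1,0}(1) = \tilde{\pi}^{N}_{n,0}(g_n)$; crucially, an inductive argument shows that all weights $W^{\cdot}_{n,s}$ are deterministic functions of $\{\bxi^{\cdot}_{q,0}\}_{q \leq n}$, so this normaliser (and all adaptive $\alpha_s$) is $\mathcal{F}_n$-measurable.

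Next, I would establish one-step identities for the conditional expectation of $\tilde{\pi}^{N}_{n,s+1}(\varphi)$ at each of the three distinct types of stage. For the within-island weighting-and-resampling step, $\xi^{k,i}_{n,1}$ are i.i.d.\ draws from the weighted empirical measure on $\{\xi^{k,j}_{n,0}\}_{j=1}^{M}$, so a direct computation gives $\E[W^{k}_{n,1} M^{-1}\sum_{i}\varphi(\xi^{k,i}_{n,1}) \mid \mathcal{F}_n] = W^{k}_{n,0} M^{-1}\sum_{i}(g_n\varphi)(\xi^{k,i}_{n,0})$, whence $\E[\tilde{\pi}^{N}_{n,1}(\varphi) \mid \mathcal{F}_{n}] = \tilde{\pi}^{N}_{n,0}(g_{n}\varphi)$. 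For an intermediate interaction stage $s \in \{1,\ldots,S_m-1\}$, the mixture representation of $\bxi^{k}_{n,s+1}$ together with column-stochasticity of $\alpha_s$ gives $\E[\tilde{\pi}^{N}_{n,s+1}(\varphi) \mid \mathcal{G}_{n,s}] = \tilde{\pi}^{N}_{n,s}(\varphi)$, where $\mathcal{G}_{n,s} := \sigma(\mathcal{F}_n,\bxi_{n,1},\ldots,\bxi_{n,s})$. For the final stage, the same algebra with the product kernel $\bK_{n+1}(\bxi^\ell_{n,S_m},\,\cdot\,) = K_{n+1}^{\otimes M}(\bxi^\ell_{n,S_m},\,\cdot\,)$ replacing the point mass yields $\E[\tilde{\pi}^{N}_{n+1,0}(\varphi) \mid \mathcal{G}_{n,S_m}] = \tilde{\pi}^{N}_{n,S_m}(K_{n+1}\varphi)$.

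Chaining these identities via the tower property from $s=S_m$ back to $s=0$, the intermediate identity stages compose trivially, and one obtains $\E[\tilde{\pi}^{N}_{n+1,0}(\varphi) \mid \mathcal{F}_{n}] = \tilde{\pi}^{N}_{n,0}(g_{n}K_{n+1}\varphi)$. Dividing by the $\mathcal{F}_n$-measurable normaliser identified above then yields $\E[\pi^{N}_{n+1}(\varphi) \mid \mathcal{F}_{n}] = \tilde{\pi}^{N}_{n,0}(g_{n}K_{n+1}\varphi)/\tilde{\pi}^{N}_{n,0}(g_n) = \pi^{N}_{n}(g_{n}K_{n+1}\varphi)/\pi^{N}_{n}(g_n)$, which is the claim.

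I expect the main obstacle to be the measurability bookkeeping in the first paragraph, namely verifying that $\sum_k W^k_{n+1,0}$ is $\mathcal{F}_n$-measurable despite the adaptivity of $\alpha_s$. This rests on the observation that the interaction stages introduce new randomness only in the $\bxi^{\cdot}_{n,s}$ for $s \geq 1$, while $W^{k}_{n,s+1} = \sum_\ell \alpha^{k\ell}_s W^\ell_{n,s}$ is a deterministic function of earlier weights, which descend recursively from $\bxi_{n,0}$ alone. A minor subsidiary point is that the $M$ coordinates of $\bxi^{k}_{n+1,0}$ share the same mixture index $\ell$, which matters for higher-moment analyses but not for the first moment here.
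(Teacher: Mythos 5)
Your proposal is correct and follows essentially the same route as the paper's proof: a stage-by-stage conditional-expectation computation (weighting/internal resampling, then the $S_m$ interaction stages via column-stochasticity of $\alpha_s$, then the mutation stage) chained by the tower property, with the $\mathcal{F}_n$-measurability of all weights $W^{k}_{n,s}$ doing the same work whether one carries the normalisation along at each step (as the paper does) or divides by the conserved total mass at the end (as you do).
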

\begin{proof}
By lines \ref{code line 3} and \ref{code line 2} in Algorithm \ref{alg:AIRPF} and the fact that $W^{k}_{n,1}$ is $\mathcal{F}_{n}$-measurable, we have
\begin{align}
    \E\left[{\pi}^{N}_{n,1}(\varphi) \,\middle|\, \mathcal{F}_{n}\right] 
    &=
    \frac{\sum_{k=1}^{m}W^{k}_{n,1}\frac{1}{M}\sum_{i=1}^{M}\E\left[\varphi\big(\xi^{k,i}_{n,1}\big)\,\middle|\, \mathcal{F}_n\right]}{\sum_{k=1}^{m}W^{k}_{n,1}} \nonumber \\
    &=
    \frac{\sum_{k=1}^{m}W^{k}_{n,0}\frac{1}{M}\sum_{i=1}^{M}g_{n}(\xi^{k,i}_{n,0})\varphi(\xi^{k,i}_{n,0})}{\sum_{k=1}^{m}W^{k}_{n,0}\frac{1}{M}\sum_{i=1}^{M}g_{n}(\xi^{k,i}_{n,0})} \nonumber\\ &= \frac{\pi^{N}_{n,0}(g_n\varphi)}{\pi^{N}_{n,0}(g_n)}. \label{eq:unbiasedness stage 0}
\end{align}
Let us write $\mathcal{F}_{n,s}= \mathcal{F}_n \vee \sigma(\bxi_{n,1},\ldots,\bxi_{n,s})$ for all $s \in \{1,\ldots,S_m\}$. Similarly to \eqref{eq:unbiasedness stage 0}, by the lines \ref{line:normaliser}, \ref{code line 4} and \ref{code line 5} of Algorithm \ref{alg:AIRPF}, the $\mathcal{F}_{n,s-1}$-measurability of $W_{n,s}^{k}$, for all $s \in \{2,\ldots,S_m\}$, and the fact that $\sum_{k=1}^m\alpha_{s}^{k,\ell}=1$ for any $\ell\in \{1,\ldots,m\}$ and $s \in \{1,\ldots,S_m\}$, we have
\begin{align}
    \E\left[{\pi}^{N}_{n,s}(\varphi)\,\middle|\, \mathcal{F}_{n,s-1}\right]  
    =
    {\pi}^{N}_{n,s-1}(\varphi) \label{eq:unbiasedness of stage s},
\end{align}
Finally, by \eqref{eq:output measure} and by observing that 
\begin{align*}
    \E\left[\varphi(\xi^{k,i}_{n+1,0})\,\middle|\, \mathcal{F}_{n,S_m}\right] = \frac{\sum_{\ell=1}^{m}\alpha^{k,\ell}_{S_m}W^{\ell}_{n,S_m}K_{n+1}(\varphi)(\xi^{\ell,i}_{n,S_m})}{\sum_{\ell=1}^m\alpha^{k,\ell}_{S_m}W^{\ell}_{n,S_m}},
\end{align*}
and that $W^{k}_{n+1,0} = \sum_{\ell=1}^m\alpha^{k,\ell}_{S_m}W^{\ell}_{n,S_m}$, we obtain, similarly to above,  
\begin{align}\label{eq:unbiasedness stage S+1}
    \E\left[ {\pi}^N_{n+1,0}(\varphi) \mid \mathcal{F}_{n,S_m}\right] = {\pi}^N_{n,S_m}(K_{n+1}(\varphi)).
\end{align}
The claim follows from \eqref{eq:unbiasedness stage 0} -- \eqref{eq:unbiasedness stage S+1} and the tower property of conditional expectations.
\end{proof}

By setting $K_{n+1}=\mathrm{Id}$ and $\varphi(x) = \ones[\xi^{k,i}_{n,0}=x]$ for any $k \in \{1,\ldots,m\}$ and $i \in \{1,\ldots,M\}$, Theorem \ref{thm:lack-of-bias} implies the lack-of-bias condition of~\cite[Assumption 2]{andrieu_et_al10} for resampling. Thus the premises set in~\cite[Assumption 2]{andrieu_et_al10} are met, and we can conclude AIRPF to be suitable for PMMH according to \cite[Theorem 4]{andrieu_et_al10}. Moreover, by~\cite[Proposition 7.4.1]{delmoral04}, Theorem \ref{thm:lack-of-bias} also yields the lack-of-bias property $\E[\gamma^N_{n}(1)] = \gamma_{n}(1).$

\subsection{Augmented Feynman-Kac model}

Our extension of Theorem 2 of \cite{whiteley_et_al16} to AIRPF is based on interpreting the measures $\pi^{N}_{0}\ldots,\pi^{N}_{n}$, obtained with Algorithm~\ref{alg:AIRPF}, as a subsequence of the normalised measures $\underline{\pi}^N_0,\ldots,\underline{\pi}^N_{(S_m+1)n}$ of \emph{an augmented Feynman-Kac model}, obtained from the original model by regarding the $S_m$ stages of AIRPF as an artificial Feynman-Kac updates involving weighting with a constant potential function, signifying a non-informative observation, and a mutation with an identity kernel, to signify that the signal cannot mutate over the augmented stages. 

Formally, the potential functions $\ug_{\rho}:\X\to[0,+\infty)$ and kernels $\uK_{\rho}:\X \times \bX \to [0,1]$ of the augmented Feynman-Kac model are defined for all $\rho \in \{0,\ldots,(S_m+1)n\}$ as
\begin{align}\label{eq:def aug potential and kernel}
    \ug_{\rho} = g_{I_{S_m+1}(\rho),R_{S_m+1}(\rho)}\qquad \text{and}\qquad \uK_{\rho} = K_{I_{S_m+1}(\rho),R_{S_m+1}(\rho)},
\end{align}
where $I_q(p) = \left\lfloor p/q \right\rfloor$ and $R_q(p) = p - qI_q(p)$, 
and for all $s \in \{0,\ldots,S_m\}$ and $p \in \N_0$
\begin{align}\label{eq:def 2 index aug potential and kernel}
g_{p,s} := g_p\ones(s=0)+\ones(s>0)\quad \text{and}\quad K_{p,s} := K_p\ones(s=0) + \mathrm{Id}\ones(s>0).
%\arraycolsep=1.4pt
%\begin{array}{rl}
%g_{p,s} &:= g_p\ones(s=0)+\ones(s>0),\\ K_{p,s}(x,\ud x') &:= K_p(x,\ud x')\ones(s=0) + \delta_{x}(\ud x')\ones(s>0).
%\end{array}
\end{align}
Analogously to Section \ref{sec:lack-of-bias}, we define $\uQ_\rho$ and $\uQ_{\rho,n}$ for all $n\in \N_0,~\rho \in \{0,\ldots,n\}$ in terms of $\ug_{\rho}$ and $\uK_{\rho}$. With $\underline{\pi}_0 = \pi_0$, the Feynman-Kac measures for the augmented model are
\begin{align*}
    \underline{\gamma}_n = \underline{\pi}_0\underline{Q}_{0,n}\quad\text{and}\quad \underline{\pi}_{n} = {\underline{\gamma}_{n}}/{\underline{\gamma}_{n}(1)}\qquad \forall~n\in \N_0,
\end{align*}
and thus $\pi_{p} = \underline{\pi}_{(S_m+1)p}$ and $\gamma_{p} = \underline{\gamma}_{(S_m+1)p}$ for all $p \in \N_0$.

In order to accommodate the augmented Feynman-Kac model interpretation for the particle approximations, the particles in Algorithm \ref{alg:AIRPF} will be indexed as
\begin{align}\label{eq:aug particle and weight indexing}
    \uxi^{(k-1)M+i}_{\rho} = \uxi^{k,i}_{\rho} = \xi^{k,i}_{I_{S_m+1}(\rho),R_{S_m+1}(\rho)} \quad \text{and} \quad \uW^{k}_{\rho} = W^{k}_{I_{S_m+1}(\rho),R_{S_m+1}(\rho)},
\end{align}
for all $\rho \in \{0,1,\ldots,(S_m+1)n\}$, $i \in \{1,\ldots,M\}$ and $k \in \{1,\ldots,m\}$. Although the particle weights within filter are constant, it is useful to define individual particle weights as 
\begin{align*}
w^{(k-1)M+i}_{p,s} = w^{k,i}_{p,s} = W^{k}_{p,s}\qquad\text{and}\qquad \uw^{(k-1)M+i}_{\rho} = \uw^{k,i}_{\rho} = \uW^{k}_{\rho}
\end{align*}
for the original and the augmented model, respectively.  Finally, analogously to the definitions in \cite{whiteley_et_al16}, we define scaled weights
\begin{align*}
    \ol{w}^{(k-1)M+i}_{p,s} = \ol{w}^{k,i}_{p,s} = w^{k,i}_{p,s}/\gamma_{p}(1)\qquad\text{and}\qquad\ol{\uw}^{(k-1)M+i}_{p} = \ol{\uw}^{k,i}_p = \uw^{k,i}_{p}/\ugamma_{p}(1)
\end{align*}
for all $k \in \{1,\ldots,m\}$, $i \in \{1,\ldots,M\}$, and $p \in \N$, and the scaled kernels
\begin{align*}
    \ol{Q}_{p,n} = \frac{Q_{p,n}}{\pi_p Q_{p,n}(1)} \quad \text{and}\quad\ol{\uQ}_{p,n} = \frac{\uQ_{p,n}}{\upi_p \uQ_{p,n}(1)},\qquad \forall~n \in \N_0,~p \in \{0,\ldots,n\}.
\end{align*}

\subsection{Consistency and Stability}
\label{sec:stability}

Our analysis is carried out under a standard regularity assumption, similarly to  \cite{whiteley_et_al16}.
% --------------------------------------
\begin{assumption}\label{ass:regularity}
There exists $\delta,\epsilon\in [1,+\infty)$ such that,
\begin{align*}
    \sup_{n\geq 0 }\sup_{x,y} \frac{g_n(x)}{g_n(y)} \leq \delta\quad\text{and}\quad K_{n}(x,\,\cdot\,) \leq \epsilon K_{n}(y,\,\cdot\,),\quad\forall (x,y) \in \X^2,~n\in \N.
\end{align*}
\end{assumption}

% -----------------------------------------------------
% MAIN MARTINGALE RESULT
% -----------------------------------------------------
\begin{proposition}\label{prop:martingale construction}
For any $m, M \in \N$, $n \in \N_0$, $\varphi \in \boundmeas(\X)$, and $N=mM$, we have the decomposition
\begin{align}\label{eq:martingale difference decompo}
    \mathcal{D}(n) = \frac{1}{N}\sum_{i=1}^N \ol{w}^{i}_{n,0}\varphi(\xi^{i}_{n,0}) - \pi_{n}(\varphi) = \sum_{\rho=0}^{\underline{n}} D_{\rho,\un},
\end{align}
where $\un = (S_m+1)n$, and for all $0 \leq \rho \leq q \in \N_0$,
\begin{align*}
D_{\rho,q} = 
\displaystyle\frac{1}{N}\sum_{i=1}^{N}\ol{\uw}_{\rho}^{i}\ol{\uQ}_{\rho,q}(\varphi)({\uxi}^{i}_{\rho}) - \frac{1}{N}\sum_{i=1}^{N}\ol{\uw}_{\rho-1}^{i}\ol{\uQ}_{\rho-1,q}(\varphi)({\uxi}^{i}_{\rho-1}), %\qquad \forall~ 0 \leq \rho \leq q \in \N_0,
\end{align*}
with the convention that $\ol{\uw}_{-1}^{i}\ol{\uQ}_{-1,q}(\varphi)(\uxi^{i}_{-1}) = \upi_{q}(\varphi)$. 
In addition, by defining $\underline{\mathcal{G}}_{-1} = \{\X,\emptyset\}$ and $\underline{\mathcal{G}}_{\rho} := \underline{\mathcal{G}}_{\rho-1}\vee\sigma(\uxi^{1}_{\rho},\ldots,\uxi^{N}_{\rho})$ for all $0 \leq \rho \leq q \in \N_{0}$, we have $\E\big[D_{\rho,q} \,|\, \underline{\mathcal{G}}_{\rho-1}\big] = 0$ almost surely,   
%\begin{align}\label{eq:martingale difference property}
%\end{align} 
and
\begin{multline}\label{eq:martingale error bound}
    \E\left[\left|\mathcal{D}(n)\right|^r\right]^{\frac{1}{r}} 
    \leq 
    \frac{C_{r}}{\sqrt{N}}\osc\big(\ol{\uQ}_{0,\un}(\varphi)\big) + \\ \frac{C_r}{N}
    \sum_{p=0}^{n-1} \E\!\left[\!\left(\sum_{s=1}^{S_m+1}\osc\big(\ol{\uQ}_{p,s,\un}(\varphi)\big)^2\sum_{i=1}^{N}\left(\ol{w}_{p,s}^{i}\right)^2\right)^{\frac{r}{2}}\right]^{\frac{1}{r}}
\end{multline}
where $S_m = \log_2(m)$ and $\overline{\uQ}_{p,s,\un} = \overline{\uQ}_{(S_m+1)p+s,\un}$ for all $p \in \{0,\ldots,n\},~s \in \{0,\ldots,S_{m}\}$.
\end{proposition}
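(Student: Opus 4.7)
The proof breaks into three tasks --- the telescoping identity, the martingale property, and the $L^{r}$ bound --- all carried out inside the augmented Feynman-Kac framework so that AIRPF becomes a single particle system indexed by $\rho\in\{0,\ldots,\un\}$. The telescoping identity \eqref{eq:martingale difference decompo} is essentially automatic: writing out $\sum_{\rho=0}^{\un}D_{\rho,\un}$ and invoking the convention $\ol{\uw}^{i}_{-1}\ol{\uQ}_{-1,\un}(\varphi)(\uxi^{i}_{-1})=\upi_{\un}(\varphi)$, the sum collapses to
\[
\frac{1}{N}\sum_{i=1}^{N}\ol{\uw}^{i}_{\un}\,\ol{\uQ}_{\un,\un}(\varphi)(\uxi^{i}_{\un})-\upi_{\un}(\varphi).
\]
Using $\uQ_{\un,\un}=\mathrm{Id}$ (whence $\ol{\uQ}_{\un,\un}=\mathrm{Id}$) and the indexing \eqref{eq:aug particle and weight indexing}, so that $\ol{\uw}^{k,i}_{\un}=\ol{w}^{k,i}_{n,0}$, $\uxi^{k,i}_{\un}=\xi^{k,i}_{n,0}$, and $\upi_{\un}=\pi_{n}$, this coincides with $\mathcal{D}(n)$.

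For the martingale property I would decompose the $\rho$-transitions into three kinds mirroring Algorithm~\ref{alg:AIRPF}: (i) initialisation at $\rho=0$ from iid draws under $\pi_{0}$; (ii) within-filter weighting-and-resampling with potential $g_{p}$ (lines \ref{code line 3}-\ref{code line 2}); and (iii) the $\alpha_{s}$-interaction stages, with the terminal one coupled with the mutation by $\bK_{p+1}$ (lines \ref{line:normaliser}-\ref{code line 5}). Using the semigroup relation $\uQ_{\rho-1,\un}=\uQ_{\rho}\uQ_{\rho,\un}$ together with the corresponding rescaling, the required identity
\[
\E\!\left[\frac{1}{N}\sum_{i=1}^{N}\ol{\uw}^{i}_{\rho}\,\ol{\uQ}_{\rho,\un}(\varphi)(\uxi^{i}_{\rho})\,\middle|\,\underline{\mathcal{G}}_{\rho-1}\right]=\frac{1}{N}\sum_{i=1}^{N}\ol{\uw}^{i}_{\rho-1}\,\ol{\uQ}_{\rho-1,\un}(\varphi)(\uxi^{i}_{\rho-1})
\]
reduces in each case to the same computation already carried out in the proof of Theorem~\ref{thm:lack-of-bias}; in particular, the column-stochasticity $\sum_{k=1}^{m}\alpha_{s}^{k\ell}=1$ handles step (iii) and $\bK_{p+1}$ being a probability kernel handles the mutation.

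For the $L^{r}$ bound I would invoke a Burkholder-type inequality for the martingale $(\sum_{\rho'\leq\rho}D_{\rho',\un})_{\rho}$ relative to $(\underline{\mathcal{G}}_{\rho})$, then apply Minkowski in $L^{r/2}$ to split the quadratic-variation sum, and finally bound each $\E[D_{\rho,\un}^{r}]^{1/r}$. At $\rho=0$ the iid draws from $\pi_0$ give the leading $N^{-1/2}\osc(\ol{\uQ}_{0,\un}(\varphi))$ term. For $\rho\geq 1$, conditionally on $\underline{\mathcal{G}}_{\rho-1}$ the increment $D_{\rho,\un}$ is a centred weighted fluctuation whose conditional second moment is bounded by $N^{-2}\osc(\ol{\uQ}_{\rho,\un}(\varphi))^{2}\sum_{i=1}^{N}(\ol{w}^{i}_{\rho-1})^{2}$; re-indexing via $\rho=(S_{m}+1)p+s$ with $p\in\{0,\ldots,n-1\}$ and $s\in\{1,\ldots,S_{m}+1\}$ regroups the contributions into the outer $p$-sum of inner stage-sums that matches \eqref{eq:martingale error bound}.

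The main obstacle is the $\alpha_{s}$-stage. Because $\alpha_{s}$ depends on $\ESS_{n,s}^{mM}$ it is random, and the induced filter-level resampling is neither a product kernel across the $m$ filters nor an ordinary multinomial. Establishing both the step-(iii) unbiasedness and the $\sum_{i}(\ol{w}^{i}_{\rho-1})^{2}$ variance bound requires using only column-stochasticity of $\alpha_{s}$ together with the block-Kronecker structure of $A_{s}$, and is precisely the point where the $\alpha$SMC argument of \cite{whiteley_et_al16} must be genuinely extended to the AIRPF setting.
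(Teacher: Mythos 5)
Your plan for the telescoping identity \eqref{eq:martingale difference decompo} and for the martingale property $\E[D_{\rho,q}\mid\underline{\mathcal{G}}_{\rho-1}]=0$ coincides with the paper's proof: the same three-case split into initialisation, within-filter resampling, and $\alpha_s$-interaction/mutation, resolved by column-stochasticity of $\alpha_s$ and the fact that $\uK_\rho$ is a probability kernel. The ``main obstacle'' you flag is actually benign: $\alpha_s$ is determined by $\ESS^{mM}_{n,s}$, a function of the stage-$s$ weights, hence $\underline{\mathcal{G}}_{\rho-1}$-measurable, so $\sum_{k}\alpha_s^{k\ell}=1$ is all that is needed.

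The genuine gap is in the $L^r$ bound, at the point you pass over in one line. At an interaction stage $\rho\in\cR_{+}^{\un}$ the island is resampled as a block: all $M$ particles of filter $k$ are taken from (or mutated from) a \emph{single} randomly chosen ancestor island. Hence the $N$ summands $\ol{\uw}^{k,i}_{\rho}\Delta^{k,i}_{\rho,\un}$ of $D_{\rho,\un}$ are not conditionally independent given $\underline{\mathcal{G}}_{\rho-1}$; only the $m$ island-level partial sums are. The bound this yields on the conditional second moment is $\sum_{k}(\ol{\uW}^{k}_{\rho})^2M^2\osc^2/N^2=M\,\osc^2\sum_{i}(\ol{\uw}^{i}_{\rho})^2/N^2$, a factor $M$ worse than the $N^{-2}\osc^2\sum_i(\ol{w}^i)^2$ you assert, and that factor would ruin the $1/\sqrt{mM}$ rate. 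This is exactly the step for which the paper introduces the finer filtration $\mathcal{G}^{p}_{\varrho}$ revealing one particle at a time and applies Burkholder--Davis--Gundy to the per-particle sequence $(M^{p,\un}_{\varrho}(\varphi))_{\varrho}$, so that the bound is expressed through the raw quadratic variation $\sum_{s}\sum_{i}\bigl(N^{-1}\ol{w}^{i}_{p,s}\Delta^{i}_{p,s,\un}\bigr)^2$ together with the pointwise bound $|\Delta^{i}_{p,s,\un}|\le\osc\bigl(\ol{\uQ}_{p,s,\un}(\varphi)\bigr)$. You need that device (or an argument for why the within-island correlations do not contribute the extra $M$) to justify your per-increment claim; note also that the weights appearing in \eqref{eq:martingale error bound} are $\ol{w}^{i}_{p,s}$, i.e.\ at stage $\rho$, not $\rho-1$.

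Secondarily, your order of operations differs from the paper's and does not literally produce \eqref{eq:martingale error bound}: applying BDG to the whole martingale over $\rho$ and then Minkowski in $L^{r/2}$ requires $r\ge 2$ and yields $\bigl(\sum_{p}\sum_{s}\|\cdot\|_{r/2}\bigr)^{1/2}$, which for $r>2$ is incomparable with the stated $\sum_{p}\E[(\sum_{s}\cdot)^{r/2}]^{1/r}$. The paper applies Minkowski across the time index $p$ first and BDG only within each time block, which is what keeps the $p$-sum outside and the $s$-sum inside a single expectation. Both orders achieve the essential point --- the stage-sum under the square root, whence $\sqrt{\log_2(m)}$ rather than $\log_2(m)$ --- so your variant would serve Theorem \ref{thm:stability} equally well, but to prove the proposition as stated you should follow the paper's order.
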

\begin{proof}
Equation  \eqref{eq:martingale difference decompo} is trivial. Define $\cR_{0}^{q} = \{\rho \in \{0,\ldots,q\}: R_{S_m+1}(\rho-1) = 0\}$, $\cR_{+}^{q} = \{\rho \in  \{0,\ldots,q\}: R_{S_m+1}(\rho-1) > 0\}$ and for all $i \in \{1,\ldots,M\}$ and $k \in \{1,\ldots,m\}$,
\begin{multline}\label{eq:2nd delta}
\Delta^{k,i}_{\rho,q}(\varphi)  \\ =
\begin{cases}
\displaystyle\ol{\uQ}_{0,q}(\varphi)(\uxi^{k,i}_{0}) - \frac{1}{\ol{\uw}_{0}^{k,i}}\upi_q(\varphi), &\rho = 0\\
\displaystyle\ol{\uQ}_{\rho,q}(\varphi)(\uxi^{k,i}_{\rho}) - \frac{1}{\ol{\uw}_{\rho}^{k,i}}\sum_{j=1}^{N}\alpha_{0}^{(k-1)M+i,j}\ol{\uw}_{\rho-1}^{j}\ol{\uQ}_{\rho-1,q}(\varphi)(\uxi^{j}_{\rho-1}), & \rho \in \cR_0^{q},\\
\displaystyle\ol{\uQ}_{\rho,q}(\varphi)(\uxi^{k,i}_{\rho}) - \frac{1}{\ol{\uw}_{\rho}^{k,i}}\sum_{\ell=1}^{m}\alpha_{R_{S_m+1}(\rho-1)}^{k,\ell}\ol{\uw}_{\rho-1}^{\ell,i}\ol{\uQ}_{\rho-1,q}(\varphi)(\uxi^{\ell,i}_{\rho-1}),& \rho \in \cR_+^{q},
\end{cases}
\end{multline}
where $\alpha_0 = \eye_{m} \otimes {M}^{-1}\ones_{M}$. With these notations we have
\begin{align*}
D_{\rho,q} = \frac{1}{N}\sum_{k=1}^{m}\sum_{i=1}^{M} \ol{\uw}^{k,i}_{\rho}\Delta^{k,i}_{\rho,q}(\varphi).
\end{align*}
Let us first show that $\E[D_{\rho,q} \mid \underline{\mathcal{G}}_{\rho-1}] = 0$, almost surely, for all $ 0 \leq \rho \leq q \in \N_0$.

For $\rho = 0$, since $\ol{\uw}_{0}^{i} = 1$ for all $i \in \{1,\ldots,N\}$ and $(\uxi^{1}_{0},\ldots,\uxi^{N}_{0})\sim \pi_0^{\otimes N}$, we have
\begin{align*}
\E\!\left[\overline{\underline{w}}_{0}^{i}\ol{\uQ}_{0,q}(\varphi)(\underline{\xi}^{i}_{0})\right] = \pi_{0}\ol{\uQ}_{0,q}(\varphi) = \frac{\upi_{0}{\uQ}_{0,q}(\varphi)}{\upi_{0}\uQ_{0,q}(1)} = \upi_{q}(\varphi),
\end{align*}
and hence $\E[D_{0,q} \mid \underline{\mathcal{G}}_{-1}] = 0$.  

Consider $\rho \in \cR_0^q$ and write $p = I_{S_m+1}(\rho-1)$.  Since $R_{S_m+1}(\rho-1) = 0$, by \eqref{eq:def aug potential and kernel} -- \eqref{eq:aug particle and weight indexing} we have\footnote{We can assume that $m > 1$, implying $S_m + 1 > 1$ as the case $m = 1$ holds by \cite[Proposition 1]{whiteley_et_al16}.} $\ug_{\rho-1} = g_{p,0} = g_{p}$, $\uK_{\rho} = \mathrm{Id}$, %K_{p,1}(\varphi) = \varphi$,
  $\uxi^{i}_{\rho-1} = \xi^{i}_{p,0}$ and $\uxi^{i}_{\rho} = \xi^{i}_{p,1}$. By Algorithm \ref{alg:AIRPF}, lines \ref{code line 3} and \ref{code line 2},
\begin{align}\label{eq:internal resampling with alpha}
    \uxi^{i}_{\rho} %= \xi^{i}_{n,1} &\sim \frac{\sum_{j=1}^N \alpha^{i,j}_0w^{j}_{n,0}g_n(\xi^{j}_{n,0})\delta_{\xi^{j}_{n,0}}}{\sum_{j=1}^N \alpha^{i,j}_0w^{i}_{n,0}g_n(\xi^{j}_{n,0})} \nonumber \\
    &\sim \frac{1}{w^{i}_{p,1}}\sum_{j=1}^N \alpha^{i,j}_0w^{j}_{p,0}\ug_{\rho-1}(\uxi^{j}_{\rho-1})\uK_{\rho}(\uxi^{j}_{\rho-1},\,\cdot\,).
\end{align}
One can check that $\ug_{\rho-1}\uK_{\rho}(\ol{\uQ}_{\rho,q}(\varphi)){w^{j}_{p,0}}/{w^{i}_{p,1}} = 
\ol{\uQ}_{\rho-1,q}(\varphi){\ol{\uw}^{j}_{\rho-1}}/{\ol{\uw}^{i}_{\rho}}$
and thus, by \eqref{eq:internal resampling with alpha},
 \begin{align} \label{eq:within process lack-of-bias}
    \E\left[\ol{\uQ}_{\rho,q}(\varphi)(\uxi^{i}_{\rho})\,\middle|\,\underline{\mathcal{G}}_{\rho-1}\right] %&= 
    &= \frac{1}{\ol{\uw}^{i}_{\rho}}\sum_{j=1}^{N}\alpha^{i,j}_0\ol{\uw}^{j}_{\rho-1} \ol{\uQ}_{\rho-1,q}(\varphi)(\uxi^{j}_{\rho-1})
\end{align}
and by \eqref{eq:2nd delta}, $\E[D_{\rho,q} \mid \underline{\mathcal{G}}_{\rho-1}] = 0$ for all $\rho \in \cR_{0}^q$.

Consider $\rho \in \cR_{+}^q$, and write $s = R_{S_m+1}(\rho-1)$. We have $\ug_{\rho-1} = 1$
for $\rho \in \cR_+^q$ and $\uK_{\rho} = \mathrm{Id}$ for $R_{S_m+1}(\rho-1) < S_m$ and $\uK_{\rho} = K_{I_{S_m+1}(\rho)}$ otherwise. Thus, similarly to above, by Algorithm \ref{alg:AIRPF}, lines \ref{code line 4} and \ref{code line 5},
\begin{align}
    \E\left[\ol{\uQ}_{\rho,q}(\varphi)(\uxi^{k,i}_{\rho})\,\middle|\,\underline{\mathcal{G}}_{\rho-1}\right] 
    &= \frac{1}{\ol{\uW}^{k}_{\rho}}\sum_{\ell=1}^{m} \alpha_{s}^{k,\ell}\ol{\uW}^{\ell}_{\rho-1}\ol{\uQ}_{\rho-1,q}(\varphi)(\uxi^{\ell,i}_{\rho-1}), \nonumber
\end{align}
and so $\E[D_{\rho,q} \mid \underline{\mathcal{G}}_{\rho-1}] = 0$ for all $\rho \in \cR_{+}^q$.

To prove \eqref{eq:martingale error bound}, we write the sum over $\rho \in \{0,\ldots,\un\}$ as a nested sum over $p \in \{0,\ldots,n\}$ and $s \in \{1,\ldots,S_m+1\}$ and apply Minkowski's inequality, yielding
\begin{multline*}
    \E\!\left[\left|\sum_{\rho=0}^{\un} D_{\rho,\un}\right|^r\right]^{\frac{1}{r}} 
    \leq \E\!\left[\left|\frac{1}{N}\sum_{i=1}^{N}\ol{\uw}_{0}^{i}\Delta^{i}_{0,\un}(\varphi)\right|^r\right]^{\frac{1}{r}} \\ + \sum_{p=0}^{n-1}\E\!\left[\left|\sum_{s=1}^{S_m+1} \frac{1}{N}\sum_{i=1}^{N}\ol{w}_{p,s}^{i}\Delta^{i}_{p,s,\un}(\varphi)\right|^r\right]^{\frac{1}{r}},
\end{multline*}
where
$\Delta^{j}_{p,s,q}(\varphi) = \Delta_{(S_m+1)p+s,q}^{I_{M}(j-1)+1,R_M(j-1)+1}(\varphi)$,
for all $j, s\in\N$, $q, p \in \N_{0}$. If we define for all $p\in \{0,\ldots,I_{S_m+1}(q)\}$, $M^{p,q}_{0}(\varphi)=0$, $\mathcal{G}^{p}_{0} = \underline{\mathcal{G}}_{p-1}$ and for all $\varrho+1 \in \{1,\ldots,S_mN\}$
\begin{align}\label{eq:martingale difference 2}
M^{p,q}_{\varrho+1}(\varphi) &= \ol{w}_{p,I_{N}(\varrho)+1}^{R_{N}(\varrho)+1}\Delta^{R_{N}(\varrho)+1}_{p,I_{N}(\varrho)+1,q}(\varphi),\quad\text{and}\quad
\mathcal{G}^{p}_{\varrho+1} = \mathcal{G}^{p}_{\varrho} \vee \sigma(\uxi^{R_{N}(\varrho)+1}_{p,I_{N}(\varrho)+1})
\end{align}
%Note that for any $\rho \in \{0,\ldots,\un\}$, $\ol{\uQ}_{\rho,\un} = \ol{\uQ}_{I_{S_m+1}(\rho),R_{S_m+1}(\rho),\un}$ and so, by the proof of  above, 
then we see that, because $\E[D_{\rho,\un} \mid \underline{\mathcal{G}}_{\rho-1}] = 0$, for all $\rho \in \{0,\ldots,\un\}$, the sequence $(M^{p,\un}_{\varrho}(\varphi),\mathcal{G}^{p}_{\varrho})_{\varrho \in \{0,\ldots,S_mN\}}$ is a martingale difference, for which the Burkholder-Davis-Gundy inequality \cite{burkholder_et_al72} yields
\begin{multline*}
    \E\left[\left|\sum_{s=1}^{S_m+1} \sum_{i=1}^{N}\frac{1}{N}\ol{w}_{p,s}^{i}\Delta^{i}_{p,s,\un}(\varphi)\right|^r \,\middle|\,\underline{\mathcal{G}}_{p-1}\right] \\ \leq 
    B^{\ast}_r\E\!\left[\!\left(\sum_{s=1}^{S_m+1} \sum_{i=1}^{N}\!\left(\frac{1}{N}\ol{w}_{p,s}^{i}\Delta^{i}_{p,s,\un}(\varphi)\right)^2\right)^{\frac{r}{2}}\,\middle|\,\underline{\mathcal{G}}_{p-1}\right]\!. %\label{eq:BDG application}
\end{multline*}
For some $B^{\ast}_{r} \in (0,+\infty)$. By $|\Delta^{i}_{p,s,\un}(\varphi)| \leq \osc\big(\overline{\uQ}_{p,s,\un}(\varphi)\big)$ and by \cite[Lemma 7.3.3]{delmoral04} as in  \cite{whiteley_et_al16},
\begin{align}
    \E\!\left[\left|\frac{1}{N}\sum_{i=1}^{N}\ol{\uw}_{0}^{i}\Delta^{i}_{0,\un}(\varphi)\right|^r\right]^{\frac{1}{r}} \leq \frac{d_{r}}{\sqrt{N}}\osc\big(\ol{\uQ}_{0,\un}(\varphi)\big),\label{eq:first term final}
\end{align}
because $\ol{w}^i_0 = 1$ for all $i\in \{1,\ldots,N\}$, from which the claim follows.
\end{proof}

Proposition \ref{prop:martingale construction} is analogous to \cite[Proposition 1]{whiteley_et_al16} and although the proof is similar to that of \cite[Proposition 1]{whiteley_et_al16}, the bound \eqref{eq:martingale error bound} is different as it involves a sum over the $S_m+1$ stages (including the internal resampling stage), which, as we shall see later in Theorem \ref{thm:stability}, translates into the $\sqrt{\log_2(m)/m}$ convergence rate of error, as $m\to +\infty$, which was originally reported in \cite{heine_et_al20,heine_et_al14}. Proposition 1 of \cite{whiteley_et_al16} uses a conditional version of Lemma 7.3.3 of \cite{delmoral04}, while we use the Burkholder-Davis-Gundy inequality. The conditional version of Lemma 7.3.3, together with Minkowski's inequality, would yield a bound  depending on $S_m+1$ with the the square root appearing \emph{inside} the sum over $s$, but with the Burkholder-Davis-Gundy inequality, the square root can be brought outside the sum, and thus \eqref{eq:martingale error bound} depends on $\sqrt{S_m+1}$, eventually leading to the convergence rate reported in \cite{heine_et_al20}. Thus, with the Burkholder-Davis-Gundy inequality, we obtain a tighter bound.

% -----------------------------------------------------
% STABILITY THEOREM
% -----------------------------------------------------
\begin{theorem}\label{thm:stability}
Fix $m,M \in \N$, and $N =mN$. Under Assumption \ref{ass:regularity}, there exist $C_1,~C_2(r) \in (0,+\infty)$, for any $r \in \{1,2,\ldots\}$, such that for all $\varphi \in \boundmeas(\X)$ %if, for some $\tau \in (0,1]$,
%\begin{align}\label{eq:ESS assumption}
%    \inf_{n\geq 0} \ESS^{N}_{n,0} \geq \tau,%\qquad \text{where}\qquad     \ESS^{N}_{n,s} = \frac{\left({N^{-1}}\sum_{i=1}^N w^{i}_{n,s}\right)^2}{{N^{-1}}\sum_{i=1}^N \left(w^{i}_{n,s}\right)^2},
%\end{align}
%then, 
\begin{align}%\label{eq:rel var theorem}
\inf_{n\geq 0} \ESS^{N}_{n,0} \geq \tau \implies 
\left\{
\arraycolsep=1.4pt
\begin{array}{rl}
    \sup_{n\geq 1} \E\left[\left({\uZ^{N}_n}/{\uZ_n}\right)^2\right]^{\frac{1}{n}} &\leq \displaystyle 1 + \frac{C_1}{mM\tau}, \\
    \sup_{n\geq 0} \E\left[\left|\pi^{N}_n(\varphi) - \pi_{n}(\varphi)\right|^{r}\right]^{\frac{1}{r}} &\leq \displaystyle C_2(r)\|\varphi\|\sqrt{\frac{\log_2(m)}{mM\tau}},
\end{array}\right.
\label{eq:uniform convergence}
\end{align}
where $\tau \in (0,1]$ and $Z^N_n = \gamma_n^{N}(1)$ and $Z_n = \gamma_n(1)$.
\end{theorem}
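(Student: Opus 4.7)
The plan is to deduce both bounds from Proposition \ref{prop:martingale construction}. Taking $\varphi \equiv 1$ in \eqref{eq:martingale difference decompo} gives $\frac{1}{N}\sum_i \ol{w}^i_{n,0} = Z^N_n/Z_n$, so $\mathcal{D}(n) = Z^N_n/Z_n - 1$ and the first claim reduces to an $L^2$ estimate on $\mathcal{D}(n)$; the second claim follows from \eqref{eq:martingale error bound} with general $\varphi$ and $r$. To make the right-hand side of \eqref{eq:martingale error bound} usable, uniform-in-$n$ estimates are needed on (i) the oscillations $\osc(\ol{\uQ}_{p,s,\un}(\varphi))$ and (ii) the scaled weight moments $N^{-2}\sum_i (\ol{w}^i_{p,s})^2$.

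For (i), I would use the standard Del Moral--Dobrushin contraction argument: Assumption \ref{ass:regularity} yields $\beta \in (0,1)$ and $C \in (0,\infty)$ depending only on $(\delta,\epsilon)$ with $\osc(\ol{Q}_{p,n}(\varphi)) \leq C\beta^{n-p}\|\varphi\|$, and because the augmented model uses unit potentials and identity kernels on stages $s > 0$, the telescoping product $\uQ_{(S_m+1)p+s+1}\cdots\uQ_{\un}$ reduces to a rescaling of $Q_{p,n}$, so the same bound extends to $\osc(\ol{\uQ}_{p,s,\un}(\varphi))$ uniformly in $s$. For (ii), since the weights are constant within each filter, a short calculation using $\sum_k (W^k_{p,s})^2 = (\sum_k W^k_{p,s})^2/(m\,\ESS^{mM}_{p,s})$ gives
\[
\frac{1}{N^2}\sum_{i=1}^{N}(\ol{w}^i_{p,s})^2 = \frac{1}{mM\,\ESS^{mM}_{p,s}}\left(\frac{m^{-1}\sum_k W^k_{p,s}}{\gamma_p(1)}\right)^2.
\]
Combining the adaptive rule on line \ref{adaptation line}, the assumption $\inf_n \ESS^{mM}_{n,0} \geq \tau$, and the multiplicative bound on $g_n$ from Assumption \ref{ass:regularity} should propagate $\ESS^{mM}_{p,s} \geq c(\delta,\epsilon)\tau$ to every intermediate stage, yielding $N^{-2}\sum_i (\ol{w}^i_{p,s})^2 \leq C'(Z^N_p/Z_p)^2/(N\tau)$ after bounding the intermediate normalising-constant estimate $m^{-1}\sum_k W^k_{p,s}$ in terms of $Z^N_p$.

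Plugging (i) and (ii) into \eqref{eq:martingale error bound} and using the recursion $Z^N_{n+1}/Z_{n+1} = (Z^N_n/Z_n)\cdot(\pi^N_n(g_n)/\pi_n(g_n))$, both claims can be closed by induction on $n$. For the first claim, applying the martingale bound with $r=2$, $\varphi=1$ to each per-step factor $\pi^N_n(g_n)/\pi_n(g_n)$ contributes a second-moment increment of order $1/(mM\tau)$, leading to $\E[(Z^N_n/Z_n)^2] \leq (1 + C_1/(mM\tau))^n$ as required. For the second claim, the geometric factor $\beta^{n-p}$ from (i) controls the slow multiplicative growth of $\E[(Z^N_p/Z_p)^{2r}]^{1/(2r)}$ uniformly in $n$, while the Burkholder--Davis--Gundy inequality applied to the inner sum over the $S_m+1$ stages produces the $\sqrt{S_m+1} = \sqrt{\log_2(m)+1}$ prefactor that, combined with the $(N\tau)^{-1/2}$ scale from (ii), gives the asserted rate $\sqrt{\log_2(m)/(mM\tau)}$.

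The main obstacle is the circular dependency just exposed: the weight-moment bound used to estimate $\pi^N_n(\varphi) - \pi_n(\varphi)$ involves the very quantity $Z^N_p/Z_p$ whose second moment is simultaneously being estimated. Following \cite[Theorem 2]{whiteley_et_al16}, this has to be disentangled by a simultaneous induction on $n$. A more AIRPF-specific subtlety is verifying that the ENF lower bound actually propagates from stage $s=0$ through the within-filter weighting step and the $S_m$ adaptive interaction stages; that propagation is what licenses the weight-moment estimate at intermediate $s$, and it rests on the multiplicative boundedness of $g_n$ from Assumption \ref{ass:regularity} together with the doubly stochastic averaging structure of the matrices $A_s$.
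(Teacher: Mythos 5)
Your plan for the first bound is essentially the paper's: reduce $\uZ^N_n/\uZ_n-1$ to the martingale decomposition of Proposition \ref{prop:martingale construction}, use the identity $N^{-2}\sum_i(\ol{w}^i_{p,s})^2=(mM\,\ESS^{mM}_{p,s})^{-1}\bigl(m^{-1}\sum_kW^k_{p,s}/\gamma_p(1)\bigr)^2$, and propagate the ENF bound through the stages ($\ESS^N_{p,1}\geq\tau/\delta^2$ via the multiplicative bound on $g_p$, then $\ESS^N_{p,s+1}\geq\ESS^N_{p,s}$ from the averaging by the doubly stochastic $A_s$); this is exactly \eqref{eq:ESS stability} and Proposition \ref{prop:bound for relative weight variance}. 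One point you gloss over even here: at the interaction stages conditional independence holds only at the \emph{filter} level, so to get an increment of order $1/(mM\tau)$ rather than $1/(m\tau)$ you must additionally apply Del Moral's Lemma 7.3.3 within each filter to recover the extra factor $1/M$ --- that two-level argument is the AIRPF-specific content of Proposition \ref{prop:bound for relative weight variance}. Since the first claim is an $L^2$ statement, the second-moment recursion then closes as you describe.

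For the second bound there is a genuine gap. Feeding a general $\varphi$ into \eqref{eq:martingale error bound} leaves the random factor $\bigl(m^{-1}\sum_kW^k_{p,s}/\gamma_p(1)\bigr)^2\approx(\uZ^N_p/\uZ_p)^2$ inside the weight-moment term, and you propose to absorb its growth with the geometric decay $\beta^{n-p}$ of the oscillations. This cannot work: $\E[(\uZ^N_p/\uZ_p)^2]$ grows like $(1+C_1/(mM\tau))^p$, and the term $p=n$ of the sum carries no geometric factor at all, so the resulting estimate is of order $(1+C_1/(mM\tau))^{n/2}$ --- bounded only when $mM$ grows linearly in $n$ (the regime of Section \ref{sec:variance comparison}), not uniformly in $n$ for fixed $N$ as the theorem asserts. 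For general $r$ you would moreover need $2r$-th moments of $\uZ^N_p/\uZ_p$, which the first part does not provide. The paper's proof avoids the circularity entirely rather than ``disentangling'' it by induction: it uses the self-normalized telescoping decomposition \eqref{eq:2nd telescope}, in which each increment $T^N_p$ is a ratio whose running normalizing constants cancel, so the conditional moment of each increment is controlled by $C_{p,n}(\barphi)\bigl(\sum_{s}(N\ESS^N_{p,s})^{-1}\bigr)^{1/2}$, a quantity bounded \emph{deterministically} by $C_{p,n}(\barphi)\sqrt{(S_m+1)\delta^2/(mM\tau)}$ under the ENF assumption; no moment of $\uZ^N_p/\uZ_p$ ever appears, and uniformity in $n$ comes solely from $\sup_n\sum_{p\leq n}C_{p,n}(\barphi)\lesssim\|\varphi\|$ under Assumption \ref{ass:regularity}. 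That normalization is the missing idea; a simultaneous induction on $n$ does not substitute for it.
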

% -------------------------------------------------
\begin{proof}
For any $n\in \N$ and $0\leq \rho \leq \un$, where $\un = (S_m+1)n$ and $S_m = \log_2(m)$, we have $\uQ_{\rho,\un} = Q_{I_{S_m+1}(\rho-1)+1,n}$, and so, similarly to \cite[eq.~(39)]{whiteley_et_al16}, by Assumption \ref{ass:regularity},
\begin{align}
    \sup_{\substack{n\geq 0\\0 \leq \rho \leq n}}\osc\left(\ol{\uQ}_{\rho,n}(1)\right)^2 \leq 4\sup_{\substack{n\geq 0\\0 \leq \rho \leq n}}\sup_{x\in\X}\left |\ol{\uQ}_{\rho,n}(1)(x)\right|^2 < \delta\epsilon < +\infty,
\end{align}
By the proof of \cite[Theorem 2]{whiteley_et_al16}, Proposition~\ref{prop:bound for relative weight variance} below implies that by writing $C_1 = C\delta\epsilon$, $C$ being the constant in Proposition \ref{prop:bound for relative weight variance}, then
\begin{align}
    \E\left[\left({\underline{Z}^N_{\un}}/{\underline{Z}_{\un}}-1\right)^2\right] \leq \left(1+\frac{C_1}{mM\tau}\right)^{n} - 1,
\end{align}
which proves the first bound in \eqref{eq:uniform convergence}.

To prove the second bound in \eqref{eq:uniform convergence}, define
\begin{align*}
    \mu^{N}_p = \frac{1}{N}\sum_{i=1}^{N}w_{p,0}^{i}\delta_{\xi^{i}_{p,0}},\qquad \forall\,p\in \{0,1,\ldots,n\},
\end{align*}
and $\barphi = \varphi - \pi_n(\varphi)$, in which case $\pi^{N}_{n}(\varphi) - \pi_{n}(\varphi) = {\mu^{N}_n(\barphi)}/{\mu^{N}_n(1)}$.
%\	begin{align*}%\label{eq:final term of 2nd telescope}
%    \pi^{N}_{n}(\varphi) - \pi_{n}(\varphi) = \frac{\mu^{N}_n(\barphi)}{\mu^{N}_n(1)}.
%\end{align*}
By defining $M_p = \uQ_{p,0,\un}$ for all $p\in \{0,1,\ldots,n\}$ and by observing that $M_n = \mathrm{Id}$, we have
\begin{align}\label{eq:2nd telescope}
    \pi^{N}_{n}(\varphi) - \pi_{n}(\varphi) = \frac{\mu^{N}_0M_{0}(\barphi)}{\mu^{N}_0M_0(1)} + \sum_{p=1}^{n}\left(
    T^N_{p}(\barphi) - \frac{\mu^{N}_{p-1}M_{p-1}(\barphi)}{\mu^{N}_{p-1}M_{p-1}(1)}T_{p}^N(1)\right), %\left(\frac{\mu^{N}_pM_{p}(\barphi)}{\mu^{N}_p(1)} - \frac{\mu^{N}_{p-1}M_{p-1}(\barphi)}{\mu^{N}_{p-1}(1)}\right)
\end{align}
where
\begin{align*}
T^N_{p}(\varphi) = \frac{\mu^{N}_{p}M_{p}(\varphi)-\mu^{N}_{p-1}M_{p-1}(\varphi)}{\mu^{N}_{p}M_{p}(1)}, \qquad \forall~\varphi \in \boundmeas(\X).
\end{align*}
Similarly to the proof of~\cite[Theorem 2]{whiteley_et_al16}, and by following the proof of Proposition \ref{prop:martingale construction},
\begin{multline}\label{eq:minkowski form}
\E\left[\left|\pi^{N}_{n}(\varphi) - \pi_n(\varphi)\right|^r\right]^{\frac{1}{r}} \leq \frac{2d^{1/r}_r}{\sqrt{N}}C_{0,n}(\barphi) \\ + \sum_{p=1}^{n}4B^{\ast}_{r}C_{p,n}(\barphi)\E\left[\left(\sum_{s=1}^{S_m+1}\frac{1}{N\ESS^{N}_{p,s}}\right)^{\frac{r}{2}}\right]^{\frac{1}{r}},
\end{multline}
where
\begin{align}\label{eq:constant}
%\frac{\osc\left(Q_{p,n}(\barphi)\right)}{\inf_x Q_{p,n}(1)(x)} \leq 2C_{p,n}(\barphi),\quad\text{where}\quad 
C_{p,n} (\varphi) := \sup_{x,y}\frac{Q_{p,n}(1)(x)}{Q_{p,n}(1)(y)}\sup_{x}\left|\frac{Q_{p,n}(\varphi)(x)}{Q_{p,n}(1)(x)}\right|,\qquad \forall~\varphi \in \boundmeas(\X).
\end{align}
By Assumption \ref{ass:regularity} and because $\inf_{n\geq 0} \ESS^{N}_{n,0} \geq \tau$,
\begin{align}\label{eq:ESS stability}
    \mathcal{E}^N_{p,1} &= \frac{\left({N}^{-1}\sum_{i=1}^{N}g_p^i(\xi^{i}_{p,0})w^{i}_{p,0}\right)^2}{{N}^{-1}\sum_{i=1}^{N}\left(g_p^i(\xi^{i}_{p,0})w^{i}_{p,0}\right)^2} \nonumber \\ 
    &\geq \left(\frac{\inf_y g_p(y)}{\sup_{x} g_p(x)}\right)^2\frac{\left({N}^{-1}\sum_{i=1}^{N}\uw^{i}_{p,0}\right)^2}{{N}^{-1}\sum_{i=1}^{N}\left(\uw^{i}_{p,0}\right)^2} \geq \frac{\ESS^{N}_{p,0}}{\delta^2} \geq \frac{\tau}{\delta^2}.
\end{align}
One can check that $\ESS^{N}_{p,s+1} \geq \ESS^{N}_{p,s}$, and hence, $\ESS^{N}_{p,s} \geq {\tau}/{\delta^2}$ for all $s \in \{0,\ldots,S_m\}$. By plugging this into \eqref{eq:minkowski form}, we have 
\begin{align}
    \E\left[\left|\pi^{N}_{n}(\varphi) - \pi_n(\varphi)\right|^r\right]^{\frac{1}{r}} &\leq \frac{2d^{1/r}_r}{\sqrt{N}}C_{0,n}(\barphi) + 4B^{\ast}_{r}\sum_{p=1}^{n}C_{p,n}(\barphi)\sqrt{\frac{\delta^2\log_2(m)}{mM\tau}}. \nonumber
\end{align}
The proof is completed by using Assumption \ref{ass:regularity} similarly to \cite{whiteley_et_al16} (see also \cite{delmoral04}).
\end{proof}

% ---------------------------------------------------------------
% RELATIVE VARIANCE PROPOSITION
% ---------------------------------------------------------------
\begin{proposition}\label{prop:bound for relative weight variance}
Under Assumption \ref{ass:regularity}, there exists $C \in (0,+\infty)$, such that if, for any $m, M \in \N$, $p \in \N_0$, and $s \in \{0,\ldots,S_m\}$, where $S_m = \log_2(m)$, one has $\ESS^{mM}_{p,s} \geq \tau \in (0,1]$, then
\begin{align*}%\label{eq:variance proposition result}
    \E\!\left[\left({\uZ^N_n}/{\uZ_n} -  1\right)^2\right]  
    =\sum_{\rho=0}^{n-1} \frac{C}{mM\tau}\osc\big(\ol{\uQ}_{\rho,n}(1)\big)^2\left(\E\!\left[\left({\underline{Z}^{N}_{\rho}}/{\underline{Z}_{\rho}}-1\right)^2\right]+1\right), 
\end{align*}
where $\uZ_\rho = \ugamma_{\rho}(1)$ and $\uZ^{N}_{\rho} = N^{-1}\sum_{i=1}^{N}{\uw}^{i}_{\rho}$.
\end{proposition}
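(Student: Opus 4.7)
The plan is to replay the martingale-difference machinery of Proposition \ref{prop:martingale construction} at the augmented time scale with $\varphi = 1$, and then use orthogonality to express $\E[(\uZ^N_n/\uZ_n - 1)^2]$ as a sum of conditional variances that can each be controlled by the ESS assumption. Specifically, I would specialise \eqref{eq:martingale difference decompo} (re-derived with the augmented index $n$ in place of $\un$) to $\varphi = 1$: since $\upi_n(1) = 1$ and $N^{-1}\sum_i \overline{\uw}^i_n \overline{\uQ}_{n,n}(1)(\uxi^i_n) = \uZ^N_n/\uZ_n$, the left-hand side collapses to $\uZ^N_n/\uZ_n - 1$, and the three-way case analysis ($\rho = 0$, $\rho \in \cR_0^n$, $\rho \in \cR_+^n$) from the proof of Proposition \ref{prop:martingale construction} carries over verbatim to show that each augmented-time difference $D_{\rho,n}$ is a $\underline{\mathcal{G}}_{\rho-1}$-martingale difference. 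Orthogonality then yields
$$\E\!\left[(\uZ^N_n/\uZ_n - 1)^2\right] = \sum_{\rho=0}^{n} \E\!\left[\Var(D_{\rho,n} \mid \underline{\mathcal{G}}_{\rho-1})\right].$$

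The bulk of the work is to bound each conditional variance by $C\,\osc(\overline{\uQ}_{\rho,n}(1))^2(\uZ^N_\rho/\uZ_\rho)^2/(mM\tau)$. I would split by stage: for $\rho \in \cR_0^n$ (internal resampling) the $M$ particles within each filter are conditionally i.i.d., so the conditional variance decomposes over both filters and particles, producing a bound of the form $\frac{M}{N^2}\sum_k(\overline{\uW}^k_\rho)^2 \osc(\overline{\uQ}_{\rho,n}(1))^2$; for $\rho \in \cR_+^n$ (filter interaction) the parents $L_k$ are conditionally independent across $k$ but the whole filter is copied (or, at $s = S_m$, co-mutated through $K_p$), so the variance decomposes only across filters, with per-filter contribution governed by the oscillation, in $\ell$, of the filter-level statistic $\sum_i \overline{\uQ}_{\rho,n}(1)(\uxi^{\ell,i}_{\rho-1})$. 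In both cases the emerging sum of squared filter weights is converted via the ENF identity $\sum_k (\overline{\uW}^k_\rho)^2 = m(\uZ^N_\rho/\uZ_\rho)^2 / \ESS^{mM}_{p,s}$ (with $\rho = (S_m+1)p+s$) and the assumption $\ESS^{mM}_{p,s} \geq \tau$ to a bound of the desired form. Passing to expectations and invoking the augmented-model unbiasedness $\E[\uZ^N_\rho/\uZ_\rho] = 1$ from Theorem \ref{thm:lack-of-bias} rewrites $\E[(\uZ^N_\rho/\uZ_\rho)^2]$ as $1 + \E[(\uZ^N_\rho/\uZ_\rho - 1)^2]$, and summing over $\rho$ delivers the claimed recursion.

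The main obstacle I anticipate is the filter-interaction case: because whole filters are copied at once rather than individual particles, the per-filter conditional variance is driven by the oscillation of the filter-level sum $\sum_i \overline{\uQ}_{\rho,n}(1)(\uxi^{\ell,i}_{\rho-1})$ in $\ell$, a quantity that can be as large as $M\,\osc(\overline{\uQ}_{\rho,n}(1))$ in the worst case. Pairing this with the $N^{-2} = (mM)^{-2}$ prefactor and with the filter-level ENF bound $m(\uZ^N_\rho/\uZ_\rho)^2/\tau$ is the delicate part, and any residual $M$-dependence must be absorbed into the universal constant $C$, which by Assumption \ref{ass:regularity} is permitted to depend on $\delta,\epsilon$ but not on $m$ or $M$.
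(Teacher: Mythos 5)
Your overall strategy coincides with the paper's: specialise the augmented-time martingale decomposition of Proposition \ref{prop:martingale construction} to $\varphi=1$, use orthogonality of the increments to write $\E[(\uZ^N_n/\uZ_n-1)^2]$ as a sum of conditional variances over $\rho$, exploit the conditional independence structure at each stage, convert the resulting sums of squared filter weights via the ENF identity and the hypothesis $\ESS^{mM}_{p,s}\geq\tau$, and use the lack of bias $\E[\uZ^N_\rho]=\uZ_\rho$ to turn second moments into the claimed recursion. Your treatment of the initialisation and of the internal-resampling stages $\rho\in\cR_0^n\cup\{0\}$ matches the paper's and correctly produces the $\osc(\ol{\uQ}_{\rho,n}(1))^2/(mM\tau)$ factor there.

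The gap is exactly where you flag it, and your proposed escape does not work. For $\rho\in\cR_+^n$ the variance decomposes only across the $m$ filters, and the per-filter term is
\begin{align*}
\E\!\left[\left(\frac{1}{M}\sum_{i=1}^{M}\ol{\uW}^k_\rho\Delta^{k,i}_{\rho,n}(1)\right)^{\!2}\,\middle|\,\underline{\mathcal{G}}_{\rho-1}\right].
\end{align*}
Bounding the inner average by its pointwise bound $\osc(\ol{\uQ}_{\rho,n}(1))$ (equivalently, bounding your filter-level sum by $M\,\osc$) gives $(\ol{\uW}^k_\rho)^2\osc(\ol{\uQ}_{\rho,n}(1))^2$ per filter and hence, after the ENF step, a contribution of order $\osc(\ol{\uQ}_{\rho,n}(1))^2/(m\tau)$ per interaction stage --- a full factor of $M$ worse than the statement. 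That factor cannot be ``absorbed into the universal constant $C$'': $C$ must be independent of $m$ and $M$, and losing the $1/M$ here would propagate through Theorem \ref{thm:stability} and degrade the $\sqrt{\log_2(m)/(mM\tau)}$ rate. The paper closes this step by applying the conditional Marcinkiewicz--Zygmund-type inequality, Lemma 7.3.3 of \cite{delmoral04}, to the within-filter average $M^{-1}\sum_i\Delta^{k,i}_{\rho,n}(1)$, which supplies the additional factor $d_2/M$ in the per-filter conditional second moment and restores $1/(mM\tau)$. Your instinct that this is the delicate point is sound --- at the interaction stages the whole block of $M$ particles shares a single random parent index, so a within-filter averaging gain is not automatic --- but the proof is incomplete without an argument that actually delivers a $1/M$ variance reduction inside each filter; a constant will not do.
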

\begin{proof}
By setting $\varphi = 1$, and by the fact established in the proof of Proposition \ref{prop:martingale construction} that $(M^{p,n}_{\varrho}(1),\mathcal{G}^{p}_{\varrho})_{\varrho \in \{0,\ldots,S_mN\}}$ defined in \eqref{eq:martingale difference 2}, is a martingale difference, we have
\begin{multline}\label{eq:independent telescope}
    \E\!\left[\!\left(\frac{\uZ^N_n}{\uZ_n} -  1\right)^2\right] 
    =\sum_{\rho \in \cR_{0}^n \cup \{0\}}\E\!\left[\!\left(\frac{1}{N}\sum_{i=1}^{N}\ol{\uw}^{i}_{\rho}\Delta^{i}_{\rho,n}(1)\right)^2\right]  + \\\sum_{\rho\in \cR_+^n}\E\!\left[\!\left(\frac{1}{m}\sum_{k=1}^{m}\frac{1}{M}\sum_{i=1}^{M}\ol{\uw}_{\rho}^{k,i}\Delta^{k,i}_{\rho,n}(1)\right)^2\right].
\end{multline}
For $\rho \in \cR_{0}^n \cup \{0\}$, $\Delta_{\rho,n}^{1}(1),\ldots,\Delta_{\rho,n}^{N}(1)$ are conditionally independent, given $\underline{\mathcal{G}}_{\rho-1}$, and hence,
\begin{align*}
\E\!\left[\!\left(\frac{1}{N}\sum_{i=1}^{N}\ol{\uw}^{i}_{\rho}\Delta^{i}_{\rho,n}(1)\right)^2\right] = \sum_{i=1}^{N}\E\!\left[\!\left(\frac{1}{N}\ol{\uw}_{\rho}^{i}\Delta^{i}_{\rho,n}(1)\right)^2\right].
\end{align*}
By observing that $N^{-1}\sum_{i=1}^{N}\overline{\uw}^{i}_{\rho} = {\uZ^{N}_{\rho}}/{\uZ_{\rho}}$ and that by the proof of Proposition \ref{prop:martingale construction}, $\E[\uZ^{N}_\rho]=\uZ_\rho$ and by the assumption that $\ESS^{mM}_{p,s} \geq \tau$ for all $p \in \N_0$ and $s \in \{0,\ldots,S_m\}$, we have,  similarly to the proof of \cite[Proposition 2]{whiteley_et_al16}
\begin{align}
    \sum_{i=1}^{N}\E\!\left[\!\left(\frac{1}{N}\ol{\uw}_{\rho}^{i}\Delta^{i}_{\rho,n}(1)\right)^2\right] 
    &= \frac{\osc\big(\overline{\uQ}_{\rho,n}(1)\big)^2}{mM\tau}\E\left[\left(\frac{\uZ^{N}_{\rho}}{\uZ_\rho}-1\right)^2+1\right]. \label{eq:var proposition R0 part} 
\end{align}
For $\rho \in \cR_{+}^n$, and $k \in \{1,\ldots,m\}$, ${M}^{-1}\sum_{i=1}^{M}\ol{\uw}_{\rho}^{k,i}\Delta^{k,i}_{\rho,n}(1)$ are conditionally independent, given $\underline{\mathcal{G}}_{\rho-1}$, and hence, for the sum over $\cR_+^n$ in \eqref{eq:independent telescope} we have, by applying Lemma 7.3.3 of \cite{delmoral04}
\begin{align*}
&\E\!\left[\!\left(\frac{1}{m}\sum_{k=1}^{m}\frac{1}{M}\sum_{i=1}^{M}\ol{\uw}_{\rho}^{k,i}\Delta^{k,i}_{\rho,n}(1)\right)^2\right] \nonumber \\
&= \E\!\left[\!\frac{1}{m^2}\sum_{k=1}^{m}\E\!\left[\!\left(\frac{1}{M}\sum_{i=1}^{M}\ol{\uW}^k_{\rho}\Delta^{k,i}_{\rho,n}(1)\right)^2\,\middle|\, \underline{\mathcal{G}}_{\rho-1}\right]\right] \\
&\leq \frac{d_2}{M}\osc\!\left(\ol{\uQ}_{\rho,n}(1)\right)^2\E\!\left[\!\frac{1}{m^2}\sum_{k=1}^{m}\left(\ol{\uw}^{k,i}_\rho\right)^2\right],
\end{align*}
and, similarly to \eqref{eq:var proposition R0 part}, we have
\begin{align}
\E\!\left[\!\left(\frac{1}{m}\sum_{k=1}^{m}\frac{1}{M}\sum_{i=1}^{M}\ol{\uw}_{\rho}^{k,i}\Delta^{k,i}_{\rho,n}(1)\right)^2\right] &\leq 
    \frac{d_2\osc\big(\ol{\uQ}_{\rho,n}(1)\big)^2}{mM\tau}\E\!\left[\!\left(\frac{\uZ^{N}_{\rho}}{\uZ_\rho}-1\right)^2+1\right].\label{eq:var proposition R+ part}
\end{align}
The claim follows from \eqref{eq:independent telescope} -- \eqref{eq:var proposition R+ part} by noting that $\osc\big(\ol{\uQ}_{n,n}(1)\big)=0$.
\end{proof}

% ----------------------------------------------------
% ----------------------------------------------------
% INTERACTING VS INDEPENDENT PARTICLE FILTERS
% ----------------------------------------------------
% ----------------------------------------------------
\section{Interacting vs.~independent particle filters}
\label{sec:AIRPF vs IBPF}

We shall compare IBPF and AIRPF in three different ways: the ratio of the marginal likelihood estimate variances, effective number of filters, and the absolute variance of the marginal likelihood estimate. 

\subsection{Ratio of the marginal likelihood estimate variance}
\label{sec:variance comparison}

\newcommand{\IBPFest}{Z^{N,\mathrm{I}}_{n}}
\newcommand{\AIRPFest}{Z^{N,\mathrm{A}}_{n}}
\newcommand{\mAIRPF}{m^{\mathrm{A}}}
\newcommand{\mIBPF}{m^{\mathrm{I}}}

Let $\IBPFest$ and $\AIRPFest$ denote the marginal likelihood estimates of IBPF and AIRPF, respectively, and in addition, let $\mIBPF(n)$ and $\mAIRPF(n)$, respectively, denote the numbers of filters deployed by IBPF and AIRPF, as a function of the data record length $n$. For all filters, the sample size is $M$. 

If $\mAIRPF(n) \geq nC_1/\tau$, then by Theorem \ref{thm:stability} (see also \cite[Lemma 6, Remark 3]{whiteley_et_al16}, for details)
\begin{align}\label{eq:stability application}
  \inf_{p\in \N_0}\ESS^N_{p,0}>\tau \implies  \E\left[\left({\AIRPFest}/{Z_n}-1\right)^2\right] \leq \frac{2(S_m+1)}{M},
\end{align}
which implies that the ratio of the variances of $\IBPFest$ and $\AIRPFest$ satisfies
\begin{align*}
    \lim_{n\to+\infty} \frac{1}{n}\log \frac{\Var(\IBPFest)}{\Var(\AIRPFest)} 
    &\geq \lim_{n\to+\infty} \frac{1}{n}\log\frac{\displaystyle M\E\left[\left(Z^{M}_n/Z_n-1\right)^2\right]}{2(S_m+1)\mIBPF(n)} \nonumber \\ &\geq \Upsilon_{M}-\lim_{n\to+\infty}\frac{\log \mIBPF(n)}{n},%\label{eq:ratio limit} 
\end{align*}
where $Z^{M}_n$ is the marginal likelihood estimate of a single BPF with sample size $M$, and the first inequality follows by the independence of the $\mIBPF(n)$ filters of IBPF and \eqref{eq:stability application}. The second inequality follows from \cite[Proposition 4]{whiteley_lee14}, by which
\begin{align}\label{eq:twisted result}
    \lim_{n\to+\infty}\frac{1}{n}\log\E\left[\left({Z^{M}_n}/{Z_n}-1\right)^2\right] \geq \Upsilon_M > 0.
\end{align}
Thus, unless $\mIBPF(n)$ increases exponentially in $n$, the variance ratio will increase exponentially, while $\mAIRPF(n)$ increases only linearly. This conclusion is similar to that of \cite[Section 5]{whiteley_et_al16}, but we have demonstrated it to be true for AIRPF as well.

\subsection{Effective number of filters}
\label{sec:ESS}

Consider the ENF for $m$ independent BPFs with sample size $M$. By following the line of arguments for SIS in \cite{kong_et_al94}~(see also \cite[Proposition 1]{doucet_et_al00}), we can argue that 
\begin{align*}
\ESS^{mM}_{n,0} \approx (1 + \Var^\ast(Z^{M}_n))^{-1}\qquad\text{and}\qquad \Var^\ast(Z^{M}_n) \geq \Var^\ast(Z^{M}_{n-1}), \quad \forall~n \in \N,
\end{align*}
where $\Var^{\ast}$ denotes the variance over the probability space where the observations $Y_0,\ldots,Y_n$ random and not fixed as we have assumed so far. This suggests that the ENF for IBPF has a decreasing trend as $n\to \infty$, but we can make this notion more precise by observing that
\begin{align*}
    \lim_{n\to \infty}\lim_{m\to\infty}\ESS^{mM}_{n,0} 
    &= \lim_{n\to \infty}\frac{\left(\lim_{m\to\infty}{m}^{-1}\sum_{k=1}^{m}{W}^{k}_{n,0}\right)^2}{\lim_{m\to\infty}{m}^{-1}\sum_{k=1}^{m}\left({W}^{k}_{n,0}\right)^2} \nonumber \\
    &= \lim_{n\to \infty}\E\left[\left(\frac{Z^{M}_n}{Z_n}\right)^2\right]^{-1} = 0,
\end{align*}
where the final equation follows by \eqref{eq:twisted result}. Thus the asymptotic ENF (as $m \to\infty$) vanishes as $n\to\infty$, i.e.~for large enough $n$, only one filter will contribute to the IBPF output. For $M = 1$, ENF and ESS coincide and IBPF reduces to SIS. With AIRPF, such degeneracy can be avoided by forcing the ENF to remain above some pre-determined threshold.

\subsection{Absolute variance of the Marginal likelihood estimate}
\label{sec:collision}

\newcommand{\bk}{\boldsymbol{k}}
\newcommand{\bg}{\boldsymbol{g}}

For any $\varphi \in \boundmeas(\X)$, we define the Cartesian product function $\varphi^{\otimes 2}: (x,x') \mapsto \varphi(x)\varphi(x')$, and for any kernel $L: (x,\ud y) \mapsto L(x,\ud y)$ we define a product kernel $L^{\otimes 2}(x,x',\ud y \times \ud y') = L(x,\ud y)\otimes L(x',\ud y')$. A collision operator $\mathcal{C}_i$, is defined for $i \in \{0,1\}$ such that $\mathcal{C}_0(\varphi)(x,x') = \varphi(x,x')$ and $\mathcal{C}_1(\varphi)(x,x') = \varphi(x,x)$~\cite{cerou_et_al11}. For brevity, we shall write $\mathcal{C}_{(a_1,\ldots,a_p),(b_1,\ldots,b_p)} = \mathcal{C}_{\ones(a_1=b_1 \vee \cdots \vee a_p=b_p)}$ for any $p\in\N$. Moreover, define a kernel $\bR_n: \X^{2M} \times \mathcal{X}^{2M} \to [0,+\infty)$ as
\begin{align}\label{eq:resampling kernel}
&\bR_n(\varphi)(\bxi,\bxi') = \overline{\boldsymbol{g}}_{n}^{\otimes 2}(\bxi,\bxi') 
\sum_{\bu, \bv \in \{1,\ldots,M\}^M}  V_n^{\bu}(\bxi)V^{\bv}_n(\bxi') \varphi\!\left(\bxi^{\bu},{\bxi'}^{\bv}\right) 
\end{align}
for all $\bxi,\bxi' \in \X^{M}$ and $\varphi \in \boundmeas(\X^{2M})$, where $\bxi^{\bu} = (\xi^{u_1},\ldots,\xi^{u_M})$ and 
\begin{align*}
\ol{\bg}_n(\bxi) = \frac{1}{M}\sum_{i=1}^{M}g_n(\xi^{i})\qquad \text{and}\qquad V_n^{\bu}(\bxi) = \prod_{k=1}^{M} \frac{g_{n}(\xi^{u_k})}{\sum_{\ell=1}^{M}g_{n}(\xi^{\ell})} 
\end{align*}
for all $\bxi = (\xi^1,\ldots,\xi^M) \in \X^{M}$ and $\bu = (u_1,\ldots, u_M) \in \{1,\ldots,M\}^{M}$. To give some context, $\bR_n$ corresponds to the internal resampling within two filters that are chosen according to their weights. %For any $\varphi \in \boundmeas(\X)$  we shall also define 
%\begin{align*}
%\bvarphi(\bxi) = \sum_{i=1}^{M} \varphi(\xi^i), \qquad \forall~\bxi = (\xi^1,\ldots,\xi^M) \in \X^{\otimes M}.
%\end{align*}
For brevity, we also write $\sum_{i_1,\ldots,i_p} = \sum_{i_1=1}^{m}\cdots\sum_{i_p=1}^{m}$ for any $p \in \N$.

%
%
% Exact marginal likelihood variance formula
%
%
\begin{theorem}\label{thm:collision result}
For all $\varphi \in \boundmeas(\X)$, $n \in \N_0$, $m,M \in \N$ with $N = mM$ and $S_m = \log_2(m)$,
\begin{multline*}
    \E\!\left[\left(\sum_{i=1}^{N}W^{i}_{n,0}\varphi(\xi^{i}_{n,0})\right)^2\right] = \sum_{\bi,\bj}\!\left[\prod_{k=0}^{n-1} \prod_{s_k=1}^{S_m}  A^{i_{k,s_k+1}i_{k,s_k}}_{s_k}A^{j_{k,s_k+1}j_{k,s_k}}_{s_k}\right] \\ \times \pi^{\otimes 2M}_0\mathcal{C}_{i_{0,1},j_{0,1}}\bH_{0}^{\bi_{0},\bj_{0}} \cdots \bH_{n-1}^{\bi_{n-1},\bj_{n-1}}(\bvarphi^{\otimes 2}),
%     = \sum_{\bi,\bj}\!\left[\prod_{k=0}^{n-1} \Xi_k^{\bi_k,\bj_k} \right]\! \pi^{\otimes 2M}_0\mathcal{C}_{i_{0,1},j_{0,1}}\bH_{0}^{\bi_{0},\bj_{0}} \cdots \bH_{n-1}^{\bi_{n-1},\bj_{n-1}}(\bvarphi^{\otimes 2}) 
\end{multline*}
where $\bi = (\bi_{0},\ldots,\bi_{n-1})$, $\bj = (\bj_{0},\ldots,\bj_{n-1})$ and for all $0\leq q < n$, we define $\bi_q = (i_{q,1},\ldots,i_{q,S_m+1})$, $\bj_q = (j_{q,1},\ldots,j_{q,S_m+1})$, 
$$\bH^{\bi_q,\bj_q}_{q} = \bR_{q}\mathcal{C}_{(i_{q,1},\ldots,i_{q,S_m}),(j_{q,1},\ldots,j_{q,S_m})}\bK_{q+1}^{\otimes 2}\mathcal{C}_{i_{q,S_m+1},j_{q,S_m+1}},$$ and
\begin{align*}
\bvarphi(\bxi) = \sum_{i=1}^{M} \varphi(\xi^i), \qquad\forall~\bxi = (\xi^1,\ldots,\xi^M) \in \X^{ M}.
\end{align*}
\end{theorem}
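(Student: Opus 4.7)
The plan is to prove the identity by induction on $n$, tracking the ancestry of two ``virtual copies'' of the particle filter through each AIRPF iteration. First I would rewrite the left-hand side in filter-level notation as
$\sum_{k_1, k_2=1}^{m} \E[W^{k_1}_{n,0} W^{k_2}_{n,0} \bvarphi(\bxi^{k_1}_{n,0}) \bvarphi(\bxi^{k_2}_{n,0})]$,
and show by induction that this equals the claimed RHS. The inductive step will consist of conditioning on the state at time $n$, analysing one full iteration (internal resampling, the $S_m$ interaction stages, and mutation) out to time $n+1$, and then recovering the operator $\bH_n^{\bi_n,\bj_n}$ from the resulting expansion in order to apply the inductive hypothesis.

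For the one-step analysis I would introduce formal ancestor path indices $(i_{n,1},\ldots,i_{n,S_m+1})$ for each virtual copy with $i_{n,S_m+1}$ equal to the final filter index, and expand the weights recursively using $W^k_{n,s+1}=\sum_\ell A^{k,\ell}_s W^\ell_{n,s}$ together with the internal-resampling identity $W^k_{n,1}=W^k_{n,0}\ol{\boldsymbol{g}}_n(\bxi^k_{n,0})$. The product $W^{k_1}_{n+1,0}W^{k_2}_{n+1,0}$ combined with the ancestor-selection probabilities then telescopes into a sum over $\bi_n,\bj_n$ weighted by $\prod_{s=1}^{S_m} A^{i_{n,s+1},i_{n,s}}_s A^{j_{n,s+1},j_{n,s}}_s$, times the stage-$0$ weights $W^{i_{n,1}}_{n,0}W^{j_{n,1}}_{n,0}$ and the pair of $\ol{\boldsymbol{g}}_n$ factors that will eventually be absorbed into $\bR_n$.

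The heart of the argument is the collision analysis of the conditional expectation of $\bvarphi(\bxi^{k_1}_{n+1,0})\bvarphi(\bxi^{k_2}_{n+1,0})$ given the two ancestor paths and the stage-$0$ particles. Since the $S_m$ interaction stages merely copy tuples without fresh randomness, the pre-mutation tuples of the two virtual copies coincide exactly when $i_{n,s}=j_{n,s}$ for some $s\in\{1,\ldots,S_m\}$, and the final particles coincide exactly when additionally $i_{n,S_m+1}=j_{n,S_m+1}$. These two binary conditions produce four cases that match term-by-term the four $\mathcal{C}_0/\mathcal{C}_1$ combinations inside $\bH_n^{\bi_n,\bj_n}$: the inner $\mathcal{C}_{i_{n,S_m+1},j_{n,S_m+1}}$ collapses $\bvarphi^{\otimes 2}$ to $\bvarphi^2$ before mutation whenever the final filters agree, and $\mathcal{C}_{(i_{n,1},\ldots,i_{n,S_m}),(j_{n,1},\ldots,j_{n,S_m})}$ identifies the two inputs of $\bK_{n+1}^{\otimes 2}$ whenever the paths cross at some pre-mutation stage; $\bR_n$ then encodes the pair of internal multinomial resamplings and supplies the $\ol{\boldsymbol{g}}_n^{\otimes 2}$ weight. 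Together this identifies the conditional contribution as $\bH_n^{\bi_n,\bj_n}(\bvarphi^{\otimes 2})$ evaluated at the stage-$0$ ancestor tuples. Iterating this back through the $n$ iterations via the inductive hypothesis delivers the claim, and the base case at $n=0$ follows directly from $\bxi^k_{0,0}\iidsim\pi_0^{\otimes M}$, with the initial $\mathcal{C}_{i_{0,1},j_{0,1}}$ correctly accounting for whether the two copies start from the same initial filter.

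The main obstacle I anticipate is the bookkeeping reconciliation in the case $k_1=k_2$: the formula sums over two independent chains even though physically there is only one true ancestor chain for each destination filter, so naively the algebra appears to over-count. My plan to resolve this is to note that expanding $W^{k_1}_{n+1,0}=\sum_\ell A^{k_1,\ell}_{S_m}W^\ell_{n,S_m}$ inside the second moment introduces an extra algebraic sum that plays the role of the ``second'' chain, and the telescoping of the weight denominators along each chain, combined with the collision operators, reassembles exactly the right contributions in every collision pattern; checking this term-by-term in all four cases of the pair $(\mathcal{C}_A,\mathcal{C}_B)$ will be the most delicate part of the verification.
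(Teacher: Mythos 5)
Your proposal is correct and follows essentially the same route as the paper: the paper also writes the second moment as a double sum over filter indices, performs exactly the one-step ancestral/collision analysis you describe (packaged as Lemma \ref{lem:collision induction 2}, with the backward pass over the $S_m$ interaction stages isolated in Lemma \ref{lem:collision induction 1}), and resolves the apparent over-counting in the diagonal case $i_{S_m+1}=j_{S_m+1}$ precisely by the mechanism you identify --- the expanded second weight sum collapses back because the $\mathcal{C}_1$-collapsed integrand ignores its second argument. The only difference is organizational, so no further comparison is needed.
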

\begin{proof}
The claim follows by writing
\begin{align*}
\E\left[\left(\sum_{i=1}^{N}W^{i}_{n,0}\varphi(\xi^{i}_{n,0})\right)^2\right] 
&= \sum_{i_{n,0},j_{n,0}}\E\left[W^{i_{n,0}}_{n,0}W^{j_{n,0}}_{n,0}\bvarphi(\bxi^{i_{n,0}}_{n,0})\bvarphi(\bxi^{j_{n,0}}_{n,0})\right], %\label{eq:MIPS collision 1 main}
\end{align*}
and repeatedly applying Lemma \ref{lem:collision induction 2}, first to the function $\bvarphi^{\otimes 2}$, then to the function $\bH_{n-1}^{\bi_{n-1},\bj_{n-1}}\bvarphi^{\otimes 2}$, etc., until finally applying Lemma \ref{lem:collision induction 2} to the function $\bH_{0}^{\bi_{0},\bj_{0}}\cdots\bH_{n-1}^{\bi_{n-1},\bj_{n-1}}\bvarphi^{\otimes 2}$.
\end{proof}

\begin{lemma}\label{lem:collision induction 2}
For all $n\in \N$, $\varphi \in \boundmeas(\X^{2M})$, $i_{S_m+1},j_{S_m+1} \in \{1,\ldots,m\}$ and $m,M \in \N$ with $N = mM$ and $S_m = \log_2(m)$,
\begin{multline*}
    \E\big[ W^{i_{S_m+1}}_{n+1,0}W^{j_{S_m+1}}_{n+1,0}{\varphi}(\bxi^{i_{S_m+1}}_{n+1,0},\bxi^{j_{S_m+1}}_{n+1,0})\big] \\ =\sum_{\bi,\bj}\left[\prod_{s=1}^{S_m}A^{i_{s+1}i_{s}}_{s}A^{j_{s+1}j_{s}}_{s}\right]\!\E\!\left[W^{i_{1}}_{n,0}W^{j_{1}}_{n,0}
    \bH_{n}^{\bi,\bj}\varphi\big(\bxi^{i_{1}}_{n,0},\bxi^{j_{1}}_{n,0}\big)\right]\!,
\end{multline*}
where $\bi = (i_1,\ldots,i_{S_m}),~\bj = (j_1,\ldots,j_{S_m}) \in \{1,\ldots,m\}^{S_m}$.
\end{lemma}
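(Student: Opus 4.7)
The plan is to peel the $S_m+1$ sub-stages of Algorithm \ref{alg:AIRPF} (internal resampling plus the $S_m$ inter-filter interaction/mutation stages) off the back of iteration $n+1$, one by one, via the tower property with successive conditionings on $\mathcal{F}_{n,S_m},\mathcal{F}_{n,S_m-1},\ldots,\mathcal{F}_n$. The natural inductive identity, for any $\psi\in\boundmeas(\X^{2M})$ and filter indices $i_{s+1},j_{s+1}\in\{1,\ldots,m\}$, is
\begin{multline*}
\E\big[W^{i_{s+1}}_{n,s+1}W^{j_{s+1}}_{n,s+1}\psi(\bxi^{i_{s+1}}_{n,s+1},\bxi^{j_{s+1}}_{n,s+1})\,\big|\,\mathcal{F}_{n,s}\big] \\ =\sum_{i_s,j_s}A^{i_{s+1}i_s}_sA^{j_{s+1}j_s}_sW^{i_s}_{n,s}W^{j_s}_{n,s}\,(\mathcal{C}_{i_{s+1},j_{s+1}}\psi)(\bxi^{i_s}_{n,s},\bxi^{j_s}_{n,s}),
\end{multline*}
for $s\in\{1,\ldots,S_m-1\}$, and the same with $\bK_{n+1}^{\otimes 2}\psi$ in place of $\psi$ at $s=S_m$ (corresponding to the final mutation stage in line \ref{code line 5}).

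When $i_{s+1}\neq j_{s+1}$ the two draws are conditionally independent by line \ref{code line 4} (or line \ref{code line 5} for $s=S_m$), and the normalising denominators of the mixture distributions cancel against the $W_{n,s+1}$ prefactors, yielding the double sum directly. When $i_{s+1}=j_{s+1}$ only one draw is made, producing a single sum; the remedy is the key observation that $\mathcal{C}_1\psi(\bxi,\bxi')$ does not depend on $\bxi'$, which lets me insert an exact trivial factor $\sum_{j_s}A^{j_{s+1}j_s}_sW^{j_s}_{n,s}/W^{j_{s+1}}_{n,s+1}=1$ against $\mathcal{C}_1\psi$ to recover the double-sum form. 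The internal resampling sub-stage ($s=0\to 1$) is handled identically via lines \ref{code line 3} and \ref{code line 2}: independent per-filter resamplings reproduce exactly $\bR_n$ of \eqref{eq:resampling kernel} when $i_1\neq j_1$, and the same $\mathcal{C}_1$-folding handles $i_1=j_1$ using $\sum_\bv V_n^\bv(\bxi)=1$.

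Iterating these $S_m+1$ identities via the tower property assembles the desired coefficient $\prod_{s=1}^{S_m}A^{i_{s+1}i_s}_sA^{j_{s+1}j_s}_s$ and the nested operator $\bR_n\mathcal{C}_{i_1,j_1}\cdots\mathcal{C}_{i_{S_m},j_{S_m}}\bK^{\otimes 2}_{n+1}\mathcal{C}_{i_{S_m+1},j_{S_m+1}}$ acting on $\varphi$ at $(\bxi^{i_1}_{n,0},\bxi^{j_1}_{n,0})$. To match the target $\bH^{\bi,\bj}_n$, it only remains to collapse $\mathcal{C}_{i_1,j_1}\circ\cdots\circ\mathcal{C}_{i_{S_m},j_{S_m}}=\mathcal{C}_{(i_1,\ldots,i_{S_m}),(j_1,\ldots,j_{S_m})}$, which is purely algebraic: the absorption relations $\mathcal{C}_1\mathcal{C}_b=\mathcal{C}_b\mathcal{C}_1=\mathcal{C}_1$ for $b\in\{0,1\}$ imply that any occurrence of $\mathcal{C}_1$ folds the pair onto the diagonal and subsequent collision operators are inert, so the composite is $\mathcal{C}_1$ iff some $i_s=j_s$, matching the definition stated above \eqref{eq:resampling kernel}. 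The main obstacle I anticipate is the careful bookkeeping for the $i_{s+1}=j_{s+1}$ case at each of the $S_m+1$ stages, where one must verify that the single-draw expression agrees \emph{exactly}, not merely in some averaged or bounded sense, with the folded double-sum form; this verification rests repeatedly on the same dimension-collapse property of $\mathcal{C}_1$.
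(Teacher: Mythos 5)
Your argument is correct and is essentially the paper's proof: the paper likewise peels off the final mutation stage by conditioning (distinguishing $i_{S_m+1}\neq j_{S_m+1}$, where the two draws are conditionally independent, from $i_{S_m+1}=j_{S_m+1}$, where the single draw is folded into the double-sum form exactly as you describe), packages the intermediate stages $s=S_m-1,\ldots,1$ into a separate backward-induction lemma resting on the same per-stage identity and $\mathcal{C}_1$-absorption, and finishes with the internal resampling step producing $\bR_n$ via $W^{i_1}_{n,1}W^{j_1}_{n,1}=W^{i_1}_{n,0}W^{j_1}_{n,0}\ol{\bg}_n^{\otimes 2}(\bxi^{i_1}_{n,0},\bxi^{j_1}_{n,0})$. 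One phrasing to tighten: at $s=S_m$ the identity is not literally "the same with $\bK_{n+1}^{\otimes 2}\psi$ in place of $\psi$" (that substitution would yield $\mathcal{C}_{i_{S_m+1},j_{S_m+1}}\bK_{n+1}^{\otimes 2}\psi$, i.e.\ two independent mutations from a common ancestor), but rather $\bK_{n+1}^{\otimes 2}\mathcal{C}_{i_{S_m+1},j_{S_m+1}}\psi$ as in your correctly assembled final operator, since when $i_{S_m+1}=j_{S_m+1}$ a single mutation draw is shared by both slots.
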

\begin{proof}
Let $\underline{\mathcal{G}}_0,\underline{\mathcal{G}}_1,\ldots$ be as defined in Proposition \ref{prop:martingale construction}. If $i_{S_m+1}\neq j_{S_m+1}$, then
\begin{multline}\label{eq:collision cases 1}
\E\!\left[\varphi(\bxi^{i_{S_m+1}}_{n,S_m+1},\bxi^{j_{S_m+1}}_{n,S_m+1}) \,\middle|\, \underline{\mathcal{G}}_{(S_m+1)n+S_m}\right] \\ = 
\sum_{{i_{S_m},j_{S_m}}}\frac{A^{i_{S_m+1},i_{S_m}}_{S_m}A^{j_{S_m+1},j_{S_m}}_{S_m}W^{i_{S_m}}_{n,S_m}W^{j_{S_m}}_{n,S_m}\bK_{n+1}^{\otimes 2}(\varphi)(\bxi^{i_{S_m}}_{n,S_m},\bxi^{j_{S_m}}_{n,S_m})}{W^{i_{S_m+1}}_{n,S_m+1}W^{j_{S_m+1}}_{n,S_m+1}},
\end{multline}
and if $i_{S_m+1}=j_{S_m+1}$, then 
\begin{multline}\label{eq:collision cases 2}
\E\!\left[\varphi(\bxi^{i_{S_m+1}}_{n,S_m+1},\bxi^{j_{S_m+1}}_{n,S_m+1}) \,\middle|\, \underline{\mathcal{G}}_{(S_m+1)n+S_m}\right] \\ = \sum_{i_{S_m}}\frac{A^{i_{S_m+1},i_{S_m}}_{S_m}W^{i_{S_m}}_{n,S_m}\bK_{n+1}\mathcal{C}_1(\varphi)(\bxi^{i_{S_m}}_{n,S_m})}{W^{i_{S_m+1}}_{n,S_m+1}}.
\end{multline}
In either case, 
\begin{align*}
&    \E\!\left[W^{i_{S_m+1}}_{n,S_m+1}W^{j_{S_m+1}}_{n,S_m+1}\varphi(\bxi^{i_{S_m+1}}_{n,S_m+1},\bxi^{j_{S_m+1}}_{n,S_m+1}) \,\middle|\, \underline{\mathcal{G}}_{(S_m+1)n+S_m}\right]  \\ &= \sum_{\substack{i_{S_m}\\j_{S_m}}}A^{i_{S_m+1},i_{S_m}}_{S_m}A^{j_{S_m+1},j_{S_m}}_{S_m}  W^{i_{S_m}}_{n,S_m}W^{j_{S_m}}_{n,S_m} \bK_{n+1}^{\otimes 2}\mathcal{C}_{i_{S_m+1},j_{S_m+1}}(\varphi)\big(\bxi^{i_{S_m}}_{n,S_m},\bxi^{j_{S_m}}_{n,S_m}\big).
\end{align*}
Thus, by Lemma \ref{lem:collision induction 1}
\begin{multline*}
\E\!\left[W^{i_{S_m+1}}_{n,S_m+1}W^{j_{S_m+1}}_{n,S_m+1}\varphi(\bxi^{i_{S_m+1}}_{n,S_m+1},\bxi^{j_{S_m+1}}_{n,S_m+1}) \,\middle|\, \underline{\mathcal{G}}_{(S_m+1)n+1}\right] \\
%\sum_{i_{S_m},j_{S_m}}\!\E\left[A^{i_{S_m+1},i_{S_m}}_{S_m}A^{j_{S_m+1},j_{S_m}}_{S_m}W^{i_{S_m}}_{n,S_m}W^{j_{S_m}}_{n,S_m}\bK_{n+1}^{\otimes 2}\mathcal{C}_{i_{S_m+1},j_{S_m+1}}\varphi\big(\bxi^{i_{S_m}}_{n,S_m},\bxi^{j_{S_m}}_{n,S_m}\big)\,\middle|\,\underline{\mathcal{G}}_{(S_m+1)(n+1)}\right] \\
= \sum_{\bi, \bj} \left[\prod_{q=1}^{S_m} A^{i_{q+1},i_{q}}_{q}A^{j_{q+1},j_{q}}_{q}\right]\!W^{i_1}_{n,1}W^{j_1}_{n,1}\mathcal{C}_{\bi',\bj'}\bK_{n+1}^{\otimes 2}\mathcal{C}_{i_{S_m+1},j_{S_m+1}}(\varphi)(\bxi^{i_1}_{n,1},\bxi^{j_1}_{n,1}),
% \label{eq:plug it here}
\end{multline*}
where $\bi' = (i_2,\ldots,i_{S_m})$ and $\bj' = (j_2,\ldots,j_{S_m})$. Finally,
\begin{multline*}
    W_{n,1}^{i_1}W_{n,1}^{j_1}\E\!\left[\mathcal{C}_{\bi_{2:S_m},\bj_{2:S_m}}\bK_{n+1}^{\otimes 2}\mathcal{C}_{i_{S_m+1},j_{S_m+1}}\varphi(\bxi^{i_1}_{n,1},\bxi^{j_1}_{n,1})\,\big|\,\underline{\mathcal{G}}_{(S_m+1)n}\right] \\
    =
    W^{i_{1}}_{n,0}W^{j_{1}}_{n,0}\bH_{n}^{\bi_{n},\bj_{n}}(\varphi)\!\left(\bxi^{i_{1}}_{n,0},\bxi^{j_{1}}_{n,0}\right),
%    \label{eq:introducing UV sum}
\end{multline*}
since $W^{i_{1}}_{n,1}W^{j_{1}}_{n,1} = W^{i_{1}}_{n,0}W^{j_{1}}_{n,0}\overline{\bg}_{n}^{\otimes 2}(\bxi^{i_{1}}_{n,0},\bxi^{j_{1}}_{n,0})$. The claim then follows.
%\todo{Note: $i_0 = i_1$ and so is $j_0 = j_1$. Therefore we do not have a sum over $i_0$ or $j_0$.}
\end{proof}

\begin{lemma}\label{lem:collision induction 1}
For all $\varphi \in \boundmeas(\X^{2M})$, $n \in \N$, $s\in\{1,\ldots,S_m-1\}$ and any $i_{S_m+1}, j_{S_m+1} \in \{1,\ldots,m\}$,
\begin{multline*}
    \sum_{i_{S_m},j_{S_m}}\E\left[A^{i_{S_m+1},i_{S_m}}_{S_m}A^{j_{S_m+1},j_{S_m}}_{S_m}W^{i_{S_m}}_{n,S_m}W^{j_{S_m}}_{n,S_m}\varphi(\bxi^{i_{S_m}}_{n,S_m},\bxi^{j_{S_m}}_{n,S_m})\,\middle|\,\underline{\mathcal{G}}_{(S_m+1)n+ s} \right] \\
    = \sum_{\bi_{s:S_m},\bj_{s:S_m}} \left[\prod_{q=s}^{S_m} A^{i_{q+1}i_{q}}_{q}A^{j_{q+1}j_{q}}_{q}\right]\!W^{i_s}_{n,s}W^{j_s}_{n,s}\mathcal{C}_{\boldsymbol{i}_{s+1:S_m},\boldsymbol{j}_{s+1:S_m}}(\varphi)(\bxi^{i_s}_{n,s},\bxi^{j_s}_{n,s}),
\end{multline*}
where $\bi_{s+1:S_m} = (i_{s+1},\ldots,i_{S_m})$ and $\bj_{s+1:S_m} = (j_{s+1},\ldots,j_{S_m})$.
\end{lemma}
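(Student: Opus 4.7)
The plan is to peel off one interaction stage at a time via the tower property of conditional expectations, using as the workhorse the single-stage identity
\begin{align*}
    W^{i_q}_{n,q} W^{j_q}_{n,q} \E\!\left[\psi(\bxi^{i_q}_{n,q},\bxi^{j_q}_{n,q}) \,\middle|\, \underline{\mathcal{G}}_{(S_m+1)n+q-1}\right]
    = \sum_{i_{q-1},j_{q-1}} A^{i_q,i_{q-1}}_{q-1} A^{j_q,j_{q-1}}_{q-1} W^{i_{q-1}}_{n,q-1} W^{j_{q-1}}_{n,q-1} \mathcal{C}_{i_q,j_q}(\psi)(\bxi^{i_{q-1}}_{n,q-1},\bxi^{j_{q-1}}_{n,q-1}),
\end{align*}
valid for any $q \in \{s+1,\ldots,S_m\}$, any $i_q,j_q \in \{1,\ldots,m\}$, and any $\psi \in \boundmeas(\X^{2M})$. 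This is essentially the same computation that appears between \eqref{eq:collision cases 1} and \eqref{eq:collision cases 2} in the proof of Lemma \ref{lem:collision induction 2}: using the resampling kernel on line \ref{code line 4} of Algorithm \ref{alg:AIRPF}, one treats $i_q\neq j_q$ (conditionally independent draws) and $i_q=j_q$ (identical draws) separately, reinterpreting the single sum arising in the latter case as a double sum via $W^{i_q}_{n,q}=\sum_{j_{q-1}}A^{i_q,j_{q-1}}_{q-1}W^{j_{q-1}}_{n,q-1}$ and $\psi(\xi,\xi)=\mathcal{C}_1(\psi)(\xi,\xi')$, with $\mathcal{C}_{i_q,j_q}$ reconciling the two cases.

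Starting from the LHS and using $\E[\,\cdot\,\mid\underline{\mathcal{G}}_{(S_m+1)n+s}]=\E[\E[\,\cdot\,\mid\underline{\mathcal{G}}_{(S_m+1)n+S_m-1}]\mid\underline{\mathcal{G}}_{(S_m+1)n+s}]$, I would apply the key identity with $q=S_m$ and $\psi=\varphi$. The matrix entries $A^{i_{S_m+1},i_{S_m}}_{S_m}A^{j_{S_m+1},j_{S_m}}_{S_m}$ are deterministic and pass through the expectation, while the stage-$S_m$ weight-function product is replaced by a sum over newly introduced $i_{S_m-1},j_{S_m-1}$ with factors $A^{i_{S_m},i_{S_m-1}}_{S_m-1}A^{j_{S_m},j_{S_m-1}}_{S_m-1}$ and an outer $\mathcal{C}_{i_{S_m},j_{S_m}}$. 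Iterating for $q=S_m-1,\ldots,s+1$ (each step conditioning on the next coarser sigma-algebra and treating the previously introduced indices as outer summation variables) accumulates the product $\prod_{q=s}^{S_m}A^{i_{q+1},i_q}_q A^{j_{q+1},j_q}_q$ and the nested composition $\mathcal{C}_{i_{s+1},j_{s+1}}\circ\cdots\circ\mathcal{C}_{i_{S_m},j_{S_m}}$ acting on $\varphi$, with all remaining weights and particles living at stage $s$.

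The final step is to identify the nested composition with the single operator $\mathcal{C}_{\bi_{s+1:S_m},\bj_{s+1:S_m}}$. Since each $\mathcal{C}_{i_q,j_q}=\mathcal{C}_{\ones(i_q=j_q)}$ and $\mathcal{C}_a\mathcal{C}_b=\mathcal{C}_{a\vee b}$ on $\{0,1\}$-valued indicators (with $\mathcal{C}_0=\mathrm{Id}$ and $\mathcal{C}_1$ collapsing any further argument pair onto the diagonal), the composition reduces to $\mathcal{C}_{\ones(i_{s+1}=j_{s+1}\vee\cdots\vee i_{S_m}=j_{S_m})}$, which matches the definition adopted before Theorem \ref{thm:collision result}. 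The main (mild) obstacle is the bookkeeping: ensuring the nested collision operators combine into the claimed disjunctive indicator and that matrix-entry factors attach to the correct index pairs at each sweep step. The probabilistic content is carried entirely by the single-stage identity above, which is already present (in a less symbolic form) in the proof of Lemma \ref{lem:collision induction 2}.
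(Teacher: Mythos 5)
Your proof is correct and follows essentially the same route as the paper: the paper phrases it as a backward induction on $s$ whose inductive step is exactly your single-stage identity (the paper's substitution \eqref{eq:coll 2nd substitution}, derived from the two cases $i_q\neq j_q$ and $i_q=j_q$ as in \eqref{eq:collision cases 1}--\eqref{eq:collision cases 2} with the identity kernel), while you unroll that induction as an iterated tower-property sweep from $q=S_m$ down to $q=s+1$. Your explicit observation that $\mathcal{C}_a\mathcal{C}_b=\mathcal{C}_{a\vee b}$, so the nested collisions collapse to $\mathcal{C}_{\ones(i_{s+1}=j_{s+1}\vee\cdots\vee i_{S_m}=j_{S_m})}=\mathcal{C}_{\bi_{s+1:S_m},\bj_{s+1:S_m}}$, is a point the paper leaves implicit, but the arguments are otherwise the same.
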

\begin{proof}
The proof is by backward induction. The induction start at rank $s = S_m-1$ follows immediately by \eqref{eq:collision cases 1} -- \eqref{eq:collision cases 2} by replacing $\bK^{\otimes 2}_{n+1}$ with the identity kernel. Assume the claim to hold at rank $s$. In this case,
\begin{align}
 &   \sum_{i_{S_m},j_{S_m}}\E\left[A^{i_{S_m+1},i_{S_m}}_{S_m}A^{j_{S_m+1},j_{S_m}}_{S_m}W^{i_{S_m}}_{n,S_m}W^{j_{S_m}}_{n,S_m}\varphi(\bxi^{i_{S_m}}_{n,S_m},\bxi^{j_{S_m}}_{n,S_m})\,\middle|\,\mathcal{G}_{(S_m+1)n + s-1} \right]  \nonumber\\
  &  = \sum_{\substack{\bi_{s:S_m}\\\bj_{s:S_m}}} \left[\prod_{q=s}^{S_m} A^{i_{q+1}i_{q}}_{q}A^{j_{q+1}j_{q}}_{q}\right] W^{i_s}_{n,s}W^{j_s}_{n,s}  \E \left[\mathcal{C}_{\bi'_{s},\bj'_{s}}(\varphi)(\bxi^{i_s}_{s},\bxi^{j_s}_{s})\,\middle|\,\mathcal{G}_{(S_m+1)n + s-1} \right],   \label{eq:induction step target}
\end{align}
where we have used the shorthand notations $\bi'_s = \bi_{s+1:S_m}$ and $\bj'_s = \bj_{s+1:S_m}$. Similarly to the proof of Lemma \ref{lem:collision induction 2},
\begin{align}
    &W^{i_s}_{n,s}W^{j_s}_{n,s}\E\left[\mathcal{C}_{\bi_{s+1:S_m},\bj_{s+1:S_m}}(\varphi)(\bxi^{i_s}_{n,s},\bxi^{j_s}_{n,s})\,\middle|\,\mathcal{G}_{(S_m+1)n + s-1} \right] \nonumber \\
    &= \sum_{i_{s-1},j_{s-1}} A^{i_{s}i_{s-1}}_{s-1}A^{j_{s}j_{s-1}}_{s-1}W^{i_{s-1}}_{n,s-1}W^{j_{s-1}}_{n,s-1}\mathcal{C}_{\bi_{s:S_m},\bj_{s:S_m}}(\varphi)(\bxi^{i_{s-1}}_{n,s-1},\bxi^{j_{s-1}}_{n,s-1}),\label{eq:coll 2nd substitution}
\end{align}
and we have the claim by substituting \eqref{eq:coll 2nd substitution} into \eqref{eq:induction step target}.
\end{proof}

Theorem \ref{thm:collision result} is reminiscent to Lemma 4 of \cite{heine_et_al17}, which can be directly applied to calculating the variance for IBPF and ARPF, where individual particles at any stage are conditionally independent, given the previous stage. In AIRPF, this is not true because resampling is done across subsets (islands) of $M$ particles which makes the particles within the subsets dependent. The conditional independence holds in AIRPF, but only  at the filter level, and therefore Theorem \ref{thm:collision result} involves collisions between filters rather than collisions between particles. Due to the internal resampling, particle level interactions too need to be taken into account and this is accomplished by the kernel $\bR_{n}$, defined in \eqref{eq:resampling kernel}. This combined particle and filter level collision analysis is the main difference to Lemma 4 of \cite{heine_et_al17}.

\section{Numerical Experiments}
\label{sec:numerical experiments}

\subsection{Exact variance}

To demonstrate the results of Section \ref{sec:collision}, consider a simple HMM on $\X = \{0,1\}$:
\begin{align*}
    X_0 \sim \frac{1}{2}\delta_{0} + \frac{1}{2}\delta_{1},\quad X_{n+1} \mid X_{n} &= x_n \sim \frac{3}{4}\delta_{x_n}+\frac{1}{4}\delta_{1-x_n}, \\
    Y_n \mid X_n &= x_n \sim \frac{3}{4}\delta_{x_n} + \frac{1}{4}\delta_{1-x_n}.%\end{array}
\end{align*}
for all  $n\in \N_0$. This model is not of any practical importance, but allows us to demonstrate the performance of the various estimators in a non-approximate manner. We shall  consider the classical BPF and three different parallel particle filter algorithms: (1) IBPF, (2) augmented resampling particle filter (ARPF) of~\cite{heine_et_al20}, and (3) AIRPF. We do not regard ARPF as a practically feasible alternative for AIRPF, but it has been included to demonstrate the impact of the level of interaction to the variance of the marginal likelihood estimator.

As our purpose with this example is only to demonstrate the impact of the different interaction schemes to the variance of the marginal likelihood estimator, we will keep the scenario simple by choosing $m=4$ and $M=2$ making the total sample size $N=8$. 

Figure \ref{fig:toy example} shows the marginal likelihood estimator variances for IBPF, ARPF and AIRPF relative to that of the classical BPF with $N=8$. All relative variances are above 1, as BPF allows the full interaction between particles while the IBPF, ARPF and AIRPF constrain the interactions and therefore BPF yields the lowest variance. IBPF, without any interaction between the $m$ filters, has the highest variance with seemingly exponential growth as predicted in Section \ref{sec:variance comparison}. The performances of AIRPF and ARPF are intermediate, and suggest sub-exponential growth.

\begin{figure}[t]
\begin{center}
\includegraphics{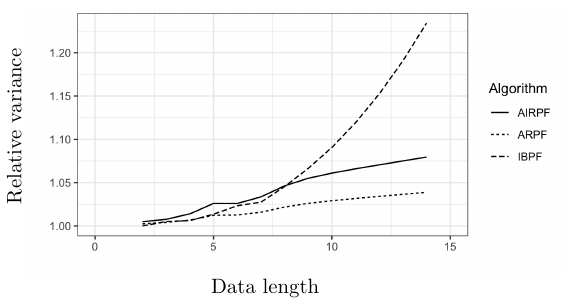}
%\begin{tikzpicture}
%\node at (0,0) {\includegraphics[width=0.75\textwidth]{toy_model_results.png}};
%\node at (-0.5,-3) {Data length};
%\node[rotate=90] at (-6.5,0.25) {Relative variance};
%\end{tikzpicture}
\end{center}
\caption{Variances of the marginal likelihood estimators, relative to BPF.}
\label{fig:toy example}
\end{figure}

Note that computational time considerations are intentionally omitted, as our purpose is to demonstrate different orders of complexity between IBPF, ARPF, and AIRPF, which suggests that asymptotically AIRPF will be superior to IBPF for longer data records. 

\subsection{PMCMC Application}
\label{sec:change point}

In our second example, we consider simultaneous change point detection and the estimation of the probability of change point occurring in sequential counting data. The data represents the number of on-line news articles that contain a specific keyword. The data contains daily counts of articles containing the keyword and the total number of articles published on the day. The model for the daily rate of articles containing the keyword is a $[0,1]$-valued Markov chain $X_0,X_1,\ldots$ with the initial distribution $X_0 \sim f_{\alpha,\beta}$, where $f_{\alpha,\beta}$ is the law of $\beta$-distribution with parameters $\alpha, \beta \in \R$. For the signal kernel, we set
\begin{align*}
    K(x,\ud x') = (1-p)\delta_{x}(\ud x') + pf_{\alpha,\beta}(\ud x'), \qquad \forall~x\in[0,1],
\end{align*}
We assume $\alpha$ and $\beta$ to be known, but $p$ to be unknown. We impose another $\beta$-prior for $p$ with known parameters. %\todo{Double check the initial distribution}

We shall estimate the joint Bayesian posterior law of $p$ and the occurrences of the change points PMMH, deploying either IBPF or AIRPF. The experiment was carried out for $m = 64$ filters with two different sample sizes, $M=200$ and $M=500$. 

\begin{figure}[t!]
\begin{center}
\includegraphics[width=0.95\textwidth]{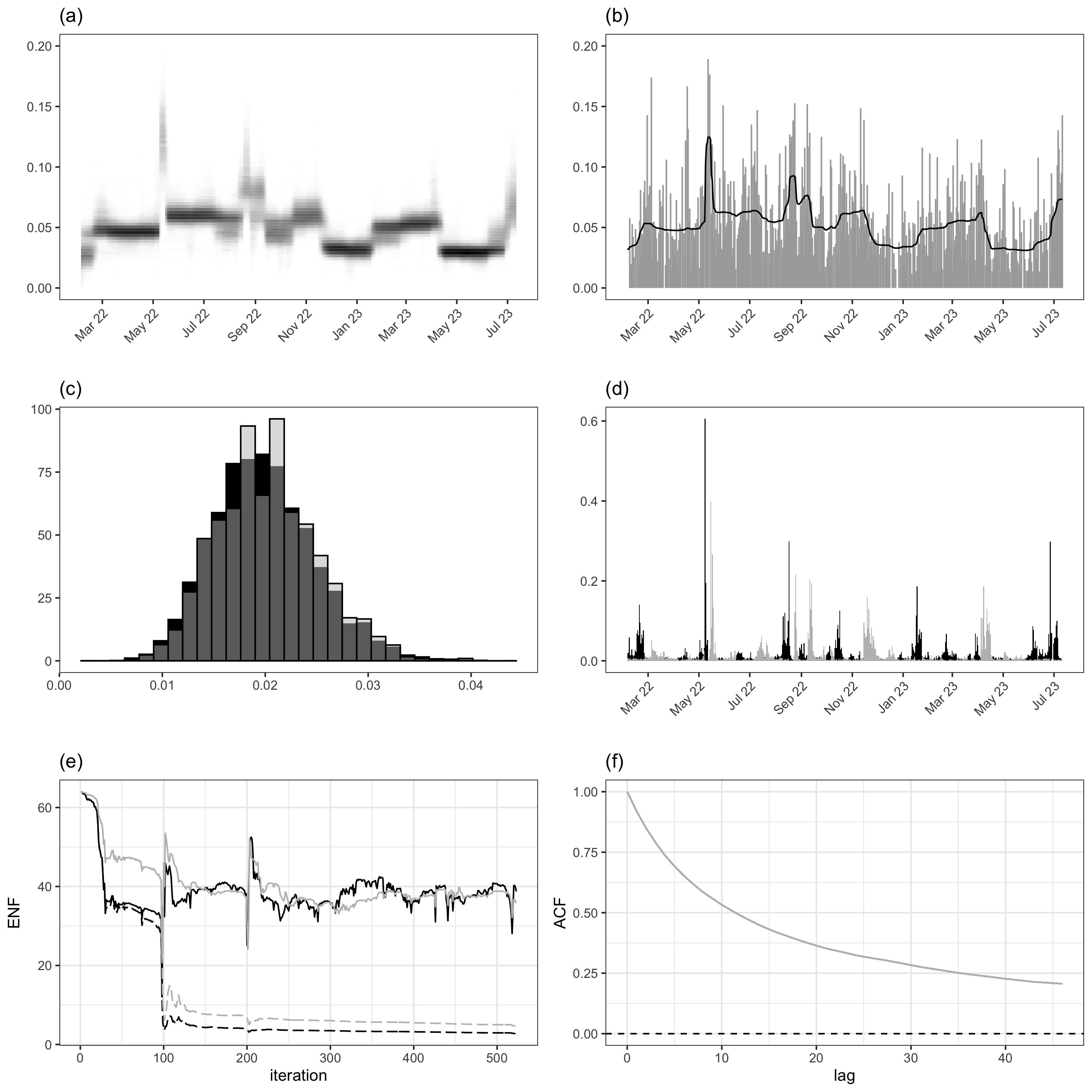}
\end{center}
\caption{(a) Marginal posterior distribution of $X_t$. (b) The data and the marginal mean of $X_t$. (c) Posterior distribution of $p$ with IBPF (transparent gray) and AIRPF (black). (d) Marginal probability of change points (black: upwards, gray: downwards). (e) Average ENF for IBPF (dahsed) and AIRPF (solid) across the 523 filter iterations. (f) ACF of $p$ for IBPF (dashed) and AIRPF (solid). In (e) and (f), black depicts $M=200$ and gray $M=500$.}
\label{fig:change points}
\end{figure}

Figure \ref{fig:change points} shows the results for 50000 iterations of PMMH. Except for the approximate posterior for $p$ in Figure \ref{fig:change points}(c), the results are shown for AIRPF only. The omitted results for IBPF are broadly similar. Figure \ref{fig:change points} demonstrates  AIRPF to be suitable for PMMH, but we refrain from claiming PMMH in general to be the preferred method for the change point detection, as many other methods exist, such as reversible jump Markov chain Monte Carlo~\cite{green95}. 

To compare IBPF with AIRPF, the ENF and the estimated autocorrelation function of the resulting Markov chain are shown in Figures \ref{fig:change points}(e) -- (f). The AIRPF deploys adaptive resampling, i.e.~the interaction between different filters was executed only when the ENF went below a predefined threshold value of $0.3\times m$. AIRPF maintains the ENF at the required level while the ENF of IBPF decreases as predicted.

We have included the results for two different sample sizes $M=200$ and $M=500$ to demonstrate the effect of $M$ on the comparative performance. For large $M$, IBPF and AIRPF produce similar results, as expected; for large $M$, running $m$ filters adds little to the accuracy of the marginal likelihood estimator as a single filter with large $M$ already gives a sufficiently accurate estimate, but for small $M$, multiple filters incur a more notable improvement.

%%%%%%%%%%%%%%%%%%%%%%%%%%%%%%%%%%%%%%%%%%%%%%%
%%% Example with single Appendix:            %%
%%%%%%%%%%%%%%%%%%%%%%%%%%%%%%%%%%%%%%%%%%%%%%%
%\begin{appendix}
%\section*{Title}\label{appn} %% if no title is needed, leave empty \section*{}.
%Appendices should be provided in \verb|{appendix}| environment,
%before Acknowledgements.
%
%If there is only one appendix,
%then please refer to it in text as \ldots\ in the \hyperref[appn]{Appendix}.
%\end{appendix}
%%%%%%%%%%%%%%%%%%%%%%%%%%%%%%%%%%%%%%%%%%%%%%%
%%% Example with multiple Appendixes:        %%
%%%%%%%%%%%%%%%%%%%%%%%%%%%%%%%%%%%%%%%%%%%%%%%
%\begin{appendix}
%\section{Title of the first appendix}\label{appA}
%If there are more than one appendix, then please refer to it
%as \ldots\ in Appendix \ref{appA}, Appendix \ref{appB}, etc.
%
%\section{Title of the second appendix}\label{appB}
%\subsection{First subsection of Appendix \protect\ref{appB}}
%
%Use the standard \LaTeX\ commands for headings in \verb|{appendix}|.
%Headings and other objects will be numbered automatically.
%\begin{equation}
%\mathcal{P}=(j_{k,1},j_{k,2},\dots,j_{k,m(k)}). \label{path}
%\end{equation}
%
%Sample of cross-reference to the formula (\ref{path}) in Appendix \ref{appB}.
%\end{appendix}

%%%%%%%%%%%%%%%%%%%%%%%%%%%%%%%%%%%%%%%%%%%%%%
%% Acknowledgements                         %%
%% should be provided in the                %%
%% Acknowledgements section.                %%
%%%%%%%%%%%%%%%%%%%%%%%%%%%%%%%%%%%%%%%%%%%%%%
\begin{acks}[Acknowledgments]
This research made use of the Balena High Performance Computing Service at the Universityof Bath.
\end{acks}

%%%%%%%%%%%%%%%%%%%%%%%%%%%%%%%%%%%%%%%%%%%%%%
%% Funding information, if any,             %%
%% should be provided in the                %%
%% funding section.                         %%
%%%%%%%%%%%%%%%%%%%%%%%%%%%%%%%%%%%%%%%%%%%%%%
%\begin{funding}
%The first author was supported by NSF Grant DMS-??-??????.
%
%The second author was supported in part by NIH Grant ???????????.
%\end{funding}

%%%%%%%%%%%%%%%%%%%%%%%%%%%%%%%%%%%%%%%%%%%%%%
%% Supplementary Material, including data   %%
%% sets and code, should be provided in     %%
%% {supplement} environment with title      %%
%% and short description. It cannot be      %%
%% available exclusively as external link.  %%
%% All Supplementary Material must be       %%
%% available to the reader on Project       %%
%% Euclid with the published article.       %%
%%%%%%%%%%%%%%%%%%%%%%%%%%%%%%%%%%%%%%%%%%%%%%
\begin{supplement}
\stitle{Codes and data}
\sdescription{The data set and the source code for Section \ref{sec:change point}, are available at \url{https://github.com/heinekmp/AIRPF_for_PMCMC}}
\end{supplement}
%\begin{supplement}
%\stitle{Title of Supplement B}
%\sdescription{Short description of Supplement B.}
%\end{supplement}

%%%%%%%%%%%%%%%%%%%%%%%%%%%%%%%%%%%%%%%%%%%%%%%%%%%%%%%%%%%%%
%%                  The Bibliography                       %%
%%                                                         %%
%%  imsart-???.bst  will be used to                        %%
%%  create a .BBL file for submission.                     %%
%%                                                         %%
%%  Note that the displayed Bibliography will not          %%
%%  necessarily be rendered by Latex exactly as specified  %%
%%  in the online Instructions for Authors.                %%
%%                                                         %%
%%  MR numbers will be added by VTeX.                      %%
%%                                                         %%
%%  Use \cite{...} to cite references in text.             %%
%%                                                         %%
%%%%%%%%%%%%%%%%%%%%%%%%%%%%%%%%%%%%%%%%%%%%%%%%%%%%%%%%%%%%%

%% if your bibliography is in bibtex format, uncomment commands:
\bibliographystyle{imsart-number} % Style BST file (imsart-number.bst or imsart-nameyear.bst)
\bibliography{my.bib}       % Bibliography file (usually '*.bib')

%%% or include bibliography directly:
%\begin{thebibliography}{9}
%
%\bibitem{r1}
%\textsc{Billingsley, P.} (1999). \textit{Convergence of
%Probability Measures}, 2nd ed.
%Wiley, New York.
%\MR{1700749}
%
%\bibitem{r2}
%\textsc{Bourbaki, N.}  (1966). \textit{General Topology}  \textbf{1}.
%Addison--Wesley, Reading, MA.
%
%\bibitem{r3}
%\textsc{Ethier, S. N.} and \textsc{Kurtz, T. G.} (1985).
%\textit{Markov Processes: Characterization and Convergence}.
%Wiley, New York.
%\MR{838085}
%
%\bibitem{r4}
%\textsc{Prokhorov, Yu.} (1956).
%Convergence of random processes and limit theorems in probability
%theory. \textit{Theory  Probab.  Appl.}
%\textbf{1} 157--214.
%\MR{84896}
%\end{thebibliography}

\end{document}